\newcounter{numcount}
\newcommand{\labelnummer}{\mbox{\normalfont (\roman{numcount})}}%
\let\curlabelspeicher\@currentlabel%
     \let\saveitem\item%
     \def\item{\saveitem%
  \def\@currentlabel{\curlabelspeicher\hskip0.5pt\labelnummer}}%
     \let\savelabel\label%
     \def\label##1{\savelabel{##1}%
       \@bsphack%
         \ifmmode\else%
           \protected@write\@auxout{}%
           {\string\newlabel{##1item}{{\labelnummer}{\thepage}}}%
         \fi%
       \@esphack%
     }%
\numberwithin{equation}{section}
\newcommand{\pro}{\mathbb{P}}
\newcommand{\vers}{\operatornamewithlimits{\to}}
\newcommand{\car}{\mathbf{1}}
\theoremstyle{plain}
\newtheorem{Th}{Theorem}[section]
\newtheorem{Le}{Lemma}[section]
\theoremstyle{definition}
\newtheorem{Rem}{Remark}[section]
\DeclareMathOperator{\supp}{supp}
\DeclareMathOperator{\tr}{tr}
\DeclareMathOperator{\dist}{dist}
\newcommand\R{\mathbb R}
\newcommand\Z{\mathbb Z}
\renewcommand\P{\mathbb P}
\newcommand\E{\mathbb E}
\newcommand{\T}{\mathbb{T}}
\newcommand\e{\mathrm{e}}
\newcommand\eps{\varepsilon}
\newcommand\beq{\begin{equation}}
\newcommand\eeq{\end{equation}}
\begin{document}

\title[Enhanced Wegner and Minami estimates and applications]
{Enhanced Wegner and Minami estimates and eigenvalue statistics of random Anderson models at spectral edges}

\author{Fran\c cois Germinet} \address{Universit{\'e} de Cergy-Pontoise,
  CNRS UMR 8088, IUF, D{\'e}partement de math{\'e}matiques, F-95000
  Cergy-Pontoise, France}
\email{\href{mailto:francois.germinet@u-cergy.fr}{francois.germinet@u-cergy.fr}}

\author{Fr{\'e}d{\'e}ric Klopp} \address{Institut de Mathmatiques de Jussieu,
  Universit{\'e} Pierre et Marie Curie, Case 186, 4 place Jussieu F-75252
  Paris cedex 05, FRANCE}
\email{\href{mailto:klopp@math.jussieu.fr}{klopp@math.jussieu.fr}}

\keywords{random Schr{\"o}dinger operators, Wegner estimate, Minami
  estimate, eigenvalue statistics}
\subjclass[2000]{81Q10,47B80,60H25,82D30,35P20}

\thanks{The authors are supported by the grant ANR-08-BLAN-0261-01.
  The authors would also like to thank the Centre Interfacultaire
  Bernoulli (EPFL, Lausanne) for its hospitality.}


\begin{abstract}
  We consider the discrete Anderson model and prove enhanced Wegner
  and Minami estimates where the interval length is replaced by the
  IDS computed on the interval. We use these estimates to improve on
  the description of finite volume eigenvalues and eigenfunctions
  obtained in \cite{Ge-Kl:10}. As a consequence of the improved
  description of eigenvalues and eigenfunctions, we revisit a number
  of results on the spectral statistics in the localized regime
  obtained in~\cite{Ge-Kl:10,Kl:10a} and extend their domain of
  validity, namely :
  \begin{itemize}
  \item the local spectral statistics for the unfolded eigenvalues;
  \item the local asymptotic ergodicity of the unfolded eigenvalues;
  \end{itemize}
  In dimension $1$, for the standard Anderson model, the improvement
  enables us to obtain the local spectral statistics at band edge,
  that is in the Lifshitz tail regime. In higher dimensions, this
  works for modified Anderson models.
\end{abstract}

\setcounter{section}{0}
\maketitle

\section{Introduction}
Anderson models are known to exhibit a region of localized states,
either at the edges of the spectrum, or in a given range of energies
if the disorder is large enough. Within this region of localization it
is natural to study the two basic components of the spectral theory in
this case: the eigenfunctions (how localized they are, etc), and the
eigenvalues (their multiplicity, their statistics, etc). The
localization properties of eigenfunctions in the localized phase are
by now quite well understood
(e.g. \cite{MR2509110,MR1859605,MR2002h:82051,MR2203782}). The precise
description of these localization properties plays an important role
in the understanding of many physical phenomena (e.g. dynamical
localization, constancy of the Hall conductance in quantum Hall
systems, Mott formula). Much less works have been devoted to the
understanding of the eigenvalues statistics of the Anderson model;
while for random matrices many more results are available
(e.g. \cite{PhysRevLett.75.69, MR1793333,MR2124079}). Poisson
statistics of Anderson eigenvalues
have been studied in \cite{MR84e:34081,MR97d:82046,MR2663411}.\\
An important ingredient that enters the proof of localization is a so
called Wegner estimate that controls the probability of finding an
eigenvalue of the finite volume operator in a small interval of
energy.  As for Poisson statistics, the analysis of
\cite{MR97d:82046,MR2663411} relies on localization properties, the
Wegner estimate and on a so called Minami estimate that enables one to
control the probability of the occurrence of two eigenvalues of the
finite volume operator in a small interval of energy. It is worth
mentioning that simplicity of the spectrum is a direct consequence of
localization properties combined with a Minami estimate \cite{KM06}.\\
Recently, in \cite{Ge-Kl:10}, the present authors introduced a refined
way of describing eigenvalues and centers of localization in finite
volumes, through a reduction procedure that enables one to approximate
eigenvalues at finite volume by independent and identically
distributed random variables. With this reduction in hand, they could
in particular extend known results about Poisson statistics and obtain
the first asymptotic result for the eigenlevel spacings
distribution. \cite{Kl:10a} used this reduction to study the local
ergodicity of the unfolded eigenvalues.\\
In the present article, we introduce enhanced Wegner and Minami
estimates that we only only within the region of localization. The
main novelty is that they take into account the weight that the
integrated density of states (IDS) gives to intervals to estimate the
probabilities of the occurrence of a single or of multiple eigenvalues
in a small energy interval. These estimates enable us to revisit the reduction procedure mentioned above and get better controls. We thus remove some limitations of \cite{Ge-Kl:10} and cover situations where the IDS
gets too small for the analysis of \cite{Ge-Kl:10} to be valid: this
happens when the IDS is exponentially small in an inverse power of the
length of the interval. As an application, our results enable us to
prove Poisson statistics for the unfolded eigenvalues in dimension 1
at the band edges, that is where a Lifshitz tail regime
occurs. To our best knowledge, this is the first such result.\\
As another application, we provide improved large deviation estimates
for the number of finite volume eigenvalues contained in suitably
scaled intervals, as well as a central limit theorem for this
quantity.
\section{Main results}
\label{sec:main-results}
We consider the discrete Anderson Hamiltonian
\begin{equation}
  \label{AndH} 
  H_{\omega}: = H_0 + V_{\omega} ,
\end{equation}
acting on $\ell^2(\mathbb{Z}^d)$, where
\begin{itemize}
\item $H_0$ is a convolution matrix with exponentially decaying
  off-diagonal coefficients i.e. exponential off-diagonal decay that
  is $H_0=((h_{k-k'}))_{k,k'\in \Z^d}$ such that,
  \begin{itemize}
  \item $h_{-k}=\overline{h_k}$ for $k\in\Z^d$ and for some $k\not=0$,
    $h_k\not=0$.
  \item there exists $c>0$ such that, for $k\in\Z^d$,
    \begin{equation}
      \vert h_k\vert\leq \frac1ce^{-c\vert k\vert}.
    \end{equation}
  \end{itemize}
  Define
  \begin{equation}
    \label{eq:19}
    h(\theta)=\sum_{k\in\Z^d}h_ke^{ik\theta}\text{ where
    }\theta=(\theta_1,\dots,\theta_d)\in\R^d.
  \end{equation}
\item $V_{\omega}$ is an Anderson potential:
  \begin{equation}
    \label{AndV} V_{\omega} (x):= \sum_{j \in \Z^d} \omega_j
    \Pi_j .  
  \end{equation}
\end{itemize}
Where $ \Pi_j$ is the projection onto site $j$, and $\omega=\{
\omega_j \}_{j\in \Z^d}$ is a family of independent identically
distributed random variables whose common probability
distribution $\mu$ is non-degenerate and has a bounded density $g$.\\
We denote by
\begin{itemize}
\item $\Sigma\subset\R$ the almost sure spectrum of $H_\omega$ (see
  e.g.~\cite{MR2509110,MR94h:47068}); it is known that
  $\Sigma=h(\R^d)+$supp$\,g$;
\item $\Sigma_{SDL}\subset\Sigma\subset\R$ the set of energies where
  strong dynamical localization holds; we refer to
  Theorem~\ref{thmFVE} for a precise description of $\Sigma_{SDL}$; it
  is known that such a region of energies exists at least near the
  edges of the spectrum $\Sigma$ (see e.g.~\cite{MR2509110,
    MR1859605,MR2002h:82051,MR2078370,MR2203782}).
\end{itemize}
Recall (see~\cite{MR2509110}) the integrated density of states (IDS)
may be defined as
\begin{equation}
  \label{defIDS}
  N(E) =\E \tr (\Pi_0\car_{]-\infty,E]}(H_{\omega}) \Pi_0)
  =\E\langle\delta_0,\car_{]-\infty,E]}(H_{\omega})\delta_0\rangle.
\end{equation}
In particular, if $I$ is an interval, we define $N(I)$ as
\begin{equation}
  \label{NI}
  N(I):=\E\tr (\Pi_0\car_{I}(H_{\omega})
  \Pi_0)=\E\langle\delta_0,\car_{I}(H_{\omega})\delta_0\rangle.
\end{equation}
For $L>1$, consider $\Lambda=[-L,L]^d\cap\Z^d$, a cube on the lattice
and let $H_\omega(\Lambda)$ be the random Hamiltonian $H_\omega$
restricted to $\Lambda$ with periodic boundary conditions.\\
Our analysis allows for the class of (continuous) random Schr{\"o}dinger
operators for which a Minami estimate has been derived, namely
Anderson type operators satisfying to a covering condition and single
site probabilities having a bounded density \cite{MR2663411,CGK11}.
\vskip.2cm
\noindent Notations: by $a\lesssim b$ we mean there exists a constant
$c\in]0,\infty[$ so that $a\le cb$. By $a\asymp b$ we mean there
exists a constant $c\in]1,\infty[$ such that $c^{-1} b \le a \le cb$.
\subsection{Improved versions of the Wegner and Minami estimates}
\label{sec:impr-vers-wegn}
We show Wegner and Minami estimates where the upper bounds keep trace
of the integrated density of states. In particular, they enable to
take advantage of the smallness of the integrated density when this
happens.\\
Let us first recall the usual Wegner and Minami estimates that are
known to hold for $H_\omega$ (see
e.g.~\cite{MR2509110,MR2360226,MR2290333,MR2505733} and
references therein):\\
\begin{description}
\item[(W)] $\displaystyle \E\left[\text{tr}(\car_J(H_\omega(\Lambda)))
  \right]\leq C |J|\,|\Lambda|$;\\
\item[(M)] $\displaystyle\E\left[\text{tr}(\car_J(H_\omega(\Lambda)))
    \cdot[\text{tr}(\car_J(H_\omega(\Lambda)))-1]\right]\leq C
  (|J|\,|\Lambda|)^2$.
\end{description}
\vskip.1cm\noindent Our main result is
\begin{Th}
  \label{thmWM}
  Fix $\xi\in(0,1)$. There exists constants $c,C\in(0,+\infty)$ such
  that for $L>1$ the following holds.
  \begin{enumerate}
  \item Let $I\subset \Sigma_{SDL}$ be a compact interval. Then
    \begin{equation}
      \label{Wcontrol}
      \left| \E \tr  \car_{I}(H_\omega(\Lambda))  - N(I)|\Lambda|
      \right| \le C \exp(-c L^\xi). 
    \end{equation}
    As a consequence, if $|N(I)|\ge C\exp(-c L^\xi)$ we get the Wegner
    estimate:
    \begin{equation}\label{W}
      \E( \tr \car_{I}(H_\omega(\Lambda))) \le 2 N(I)|\Lambda|.
    \end{equation}
  \item (High order Minami) Given $n\ge 2$ and $I_1\subset \cdots
    \subset I_n\subset \Sigma_{SDL}$ intervals so that $|N(I_n)|\ge
    C\exp(- c L^\xi)$,
    \begin{equation}
      \label{HOM}
      \E\left(\prod_{k=1}^n\left(\tr\car_{I_k}(H_\omega(\Lambda))-k+1
        \right)\right)\leq2\left(\prod_{k=1}^{n-1}\|\rho\|_\infty
        |I_k||\Lambda|\right) N(I_n)|\Lambda|.
    \end{equation}
    In particular, for $n=2$, if $|N(I)|\ge C\exp(-c L^\xi)$, we get
    the Minami estimate:
    \begin{equation}
      \label{M}
      \E\left[\tr\car_{I}(H_\omega(\Lambda)) (\tr
        \car_{I}(H_\omega(\Lambda)) -1 )\right]\leq2N(I)|I||\Lambda|^2.
    \end{equation}
  \end{enumerate}
\end{Th}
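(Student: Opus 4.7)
My plan for part (1) is to compare $\E\tr\car_I(H_\omega(\Lambda))$ with its infinite-volume analogue $\E\tr(\car_I(H_\omega)\Pi_\Lambda)$, where $\Pi_\Lambda$ projects $\ell^2(\Z^d)$ onto $\ell^2(\Lambda)$. By translation invariance of the law of $H_\omega$, the latter equals $\sum_{j\in\Lambda}\E\langle\delta_j,\car_I(H_\omega)\delta_j\rangle=N(I)|\Lambda|$. The task is then reduced to controlling the finite-versus-infinite volume discrepancy of a spectral projector, traced on $\Lambda$.

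I would carry this out in four steps. Step 1: replace $\car_I$ with a smooth cutoff $\varphi_I$ equal to $1$ on $I$ and supported in an $\eta$-neighbourhood of $I$; by the standard Wegner estimate (W), both $|\E\tr\car_I(H_\omega(\Lambda))-\E\tr\varphi_I(H_\omega(\Lambda))|$ and the analogous infinite-volume quantity are $\lesssim\eta|\Lambda|$. Step 2: use the Helffer--Sj\"ostrand formula together with the geometric resolvent identity to express the operator difference $\varphi_I(H_\omega(\Lambda))-\Pi_\Lambda\varphi_I(H_\omega)\Pi_\Lambda$ as an integral of resolvent kernels that travel from the bulk of $\Lambda$ across its boundary (for periodic BC the perturbation $H_\omega-H_\omega(\Lambda)$ is concentrated on $\partial\Lambda$ thanks to the exponential off-diagonal decay of $H_0$). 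Step 3: invoke the hypothesis $I\subset\Sigma_{SDL}$: strong dynamical localization provides averaged decay $\E|\langle\delta_x,\varphi_I(H_\omega)\delta_y\rangle|\lesssim e^{-c|x-y|^\xi}$, and corresponding fractional-moment estimates on the resolvent along the contour used in Helffer--Sj\"ostrand, yielding decay $e^{-cL^\xi}$ between boundary sites and any site at distance $\gtrsim L$ from $\partial\Lambda$. Combined with the polynomial growth in $\eta^{-1}$ of the almost analytic extension $\bar\partial\varphi_I^{\C}$, this yields a bound $\lesssim\eta^{-p}e^{-cL^\xi}$ on the trace discrepancy, for some $p$. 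Step 4: balance by choosing $\eta=e^{-c'L^\xi}$ with $c'<c/p$; both the smoothing error and the boundary error are then $\le Ce^{-c''L^\xi}$, which proves \eqref{Wcontrol}. The Wegner estimate \eqref{W} is then immediate: if $N(I)|\Lambda|\ge Ce^{-cL^\xi}$, then $\E\tr\car_I(H_\omega(\Lambda))\le N(I)|\Lambda|+Ce^{-cL^\xi}\le 2N(I)|\Lambda|$.

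For part (2), I would start from the combinatorial identity, valid whenever $I_1\subset\cdots\subset I_n$,
\begin{equation*}
\prod_{k=1}^n\bigl(\tr\car_{I_k}(H_\omega(\Lambda))-k+1\bigr)=\sum_{i_1,\ldots,i_n\text{ distinct}}\car_{I_1}(E_{i_1})\cdots\car_{I_n}(E_{i_n}),
\end{equation*}
which follows by induction from $\car_{I_k}\car_{I_{k+1}}=\car_{I_k}$. I would then mimic the iterative spectral averaging used in the proof of the high-order Minami estimate of Combes--Germinet--Klein/Klopp: conditioning on all but $n-1$ coupling constants $\omega_{j_1},\dots,\omega_{j_{n-1}}$ at well-chosen sites, and exploiting the monotonicity of each eigenvalue $E_i$ in each of these couplings together with the bounded density, each of the first $n-1$ factors produces a contribution $\|\rho\|_\infty|I_k||\Lambda|$ (the $|\Lambda|$ coming from summing over the free site). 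The remaining conditional expectation reduces to bounding $\E\tr\car_{I_n}(H_\omega(\Lambda))$, at which point the enhanced Wegner \eqref{W} from part (1) gives the crucial improvement from $|I_n||\Lambda|$ to $2N(I_n)|\Lambda|$, yielding \eqref{HOM}. Specializing to $n=2$ gives \eqref{M}.

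The main obstacle is Step~3 of part (1): one needs the genuine stretched-exponential decay extracted from strong dynamical localization to beat both the polynomial volume factor $L^d$ that appears when tracing over $\Lambda$, and the inverse power $\eta^{-p}$ arising from the smoothing. This forces a delicate matching of the smoothing parameter $\eta$ to the SDL decay rate, and requires the almost analytic extension to be chosen with enough vanishing derivatives on the real axis so that $p$ is finite; the essential point is that the bulk contribution to the trace is insensitive to the choice between $H_\omega$ and $H_\omega(\Lambda)$, and only the thin shell within logarithmic distance of $\partial\Lambda$ sees a discrepancy, whose magnitude is then killed by the $e^{-cL^\xi}$ localization decay after optimization.
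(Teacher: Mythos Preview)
Your approach for part~(2) matches the paper's: both iterate the spectral-averaging argument of Combes--Germinet--Klein, replacing the ordinary Wegner bound at the final step by the enhanced estimate~\eqref{W} from part~(1). The combinatorial identity you wrote is correct and your description is in fact more explicit than the paper's, which simply points to~\cite[Theorem~2.3]{MR2505733}.

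For part~(1) there is a genuine gap. Your comparison of $\E\tr\car_I(H_\omega(\Lambda))$ with $\E\tr(\Pi_\Lambda\car_I(H_\omega)\Pi_\Lambda)$ runs into trouble at the boundary of $\Lambda$: when you trace over all sites $j\in\Lambda$, the geometric resolvent identity only forces the resolvent kernel to travel from $j$ to $\partial\Lambda$, so for $j$ within $O(1)$ of the boundary there is no decay at all. The per-site discrepancy $\E\langle\delta_j,(\varphi_I(H_\omega(\Lambda))-\varphi_I(H_\omega))\delta_j\rangle$ is of order $|I|$ for such $j$, and there are $\asymp L^{d-1}$ of them; the trace discrepancy is therefore at best $O(|I|\,L^{d-1})$, not $O(e^{-cL^\xi})$. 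Your final paragraph acknowledges a ``thin shell'' contribution but asserts it is killed by localization decay; it is not, since localization gives nothing for kernels that do not have far to travel.

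The paper closes this gap by exploiting the \emph{covariance of the periodic finite-volume operator}: because $H_\omega(\Lambda)$ has periodic boundary conditions, its law is invariant under torus translations, hence
\[
\E\tr\car_I(H_\omega(\Lambda))=|\Lambda|\,\E\langle\delta_0,\car_I(H_\omega(\Lambda))\delta_0\rangle.
\]
This reduces the entire comparison to a \emph{single} site at the centre of $\Lambda$, at distance $\sim L$ from the boundary; now the Helffer--Sj\"ostrand/geometric-resolvent argument genuinely produces a factor $e^{-L^{\xi'}}$, and after smoothing at scale $\delta=e^{-L^{\xi''}}$ with $\xi<\xi''<\xi'$ one obtains~\eqref{Wcontrol}. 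The paper flags this explicitly (``the main gain is obtained by the use of covariance'') and devotes Remark~\ref{rem:2} to stressing that periodic boundary conditions are essential for precisely this reason. Once you insert this one-line reduction to a single central site, the rest of your Steps~1--4 go through as written.
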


\begin{Rem}
  \label{rem:1}
 (i) The constant $2$ in~\eqref{W},~\eqref{HOM} and
\eqref{M} can be replaced by any constant larger than $1$, provided
$|\Lambda|$ is large enough.\\
(ii) The improved Minami estimates can also be proved for the continuous
  Anderson models considered in~\cite{MR2663411,CGK11} near the bottom of
  the spectrum.
\end{Rem}
\subsection{Local spectral statistics}
\label{sec:spetfral-statistics}
We shall combine Theorem~\ref{thmWM} with the description of the
eigenvalues of $H_{\omega}(\Lambda)$ obtained in the
paper~\cite{Ge-Kl:10} to obtain new results for the spectral
statistics of the Anderson model. In particular, the improved Minami
estimate enables us to remove the restriction on the smallness of the
density of states that was imposed in~\cite{Ge-Kl:10,Kl:10a} to obtain
results locally in energy.
\par Consider the eigenvalues of $H_\omega(\Lambda)$ ordered
increasingly and repeated according to multiplicity and denote them by
\begin{equation*}
  E_1(\omega,\Lambda)\leq E_2(\omega,\Lambda)\leq \cdots\leq
  E_{|\Lambda|}(\omega,\Lambda).  
\end{equation*}
Following~\cite{MR2352280,Mi:11}, define the {\it unfolded
  eigenvalues} as
\begin{equation*}
  0\leq N(E_1(\omega,\Lambda))\leq N(E_2(\omega,\Lambda))\leq
  \cdots\leq N(E_{|\Lambda|}(\omega,\Lambda))\leq 1.
\end{equation*}
Let $E_0$ be an energy in $\Sigma_{SDL}$.  The {\it unfolded local level
  statistics} near $E_0$ is the point process defined by
\begin{equation}
  \label{eq:13}
  \Xi(\xi; E_0,\omega,\Lambda) = 
  \sum_{j\geq1} \delta_{\xi_j(E_0,\omega,\Lambda)}(\xi),
\end{equation}
where
\begin{equation}
  \label{eq:11}
  \xi_j(E_0,\omega,\Lambda)=|\Lambda|(N(E_j(\omega,\Lambda))-N(E_0)).
\end{equation}
The unfolded local level statistics are described by
\begin{Th}
  \label{thr:3}
  Pick $E_0$ be an energy in $\Sigma_{SDL}$ such that the integrated
  density of states satisfies, for some $\rho\in(0,1/d)$,
  $\exists\,a_0>0$ s.t. $\forall a\in(-a_0,a_0)\cap(\Sigma-E_0)$,
  \begin{equation}
    \label{eq:60}
    |N(E_0+a)-N(E_0)|\geq e^{-|a|^{-\rho}}.
  \end{equation}
  When $|\Lambda|\to+\infty$, the point process
  $\Xi(E_0,\omega,\Lambda)$ converges weakly to
  \begin{itemize}
  \item a Poisson point process on the real line with intensity $1$ if
    $E_0\in\overset{\circ}{\Sigma}$, the interior of $\Sigma$.
  \item a Poisson point process on the half line with intensity $1$ if
    $E_0\in\partial\Sigma$, the half-line being $\R^+$ (resp. $\R^-$)
    if $(E_0-\varepsilon,E_0)\cap\Sigma=\emptyset$
    (resp. $(E_0,E_0+\varepsilon)\cap\Sigma=\emptyset$) for some
    $\varepsilon>0$.
  \end{itemize}
\end{Th}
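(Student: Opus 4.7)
The plan is to follow the standard Minami--Aizenman--Warzel strategy for Poisson convergence of eigenvalue point processes, but using the enhanced Wegner and Minami estimates of Theorem~\ref{thmWM} in place of the classical ones so as to reach the Lifshitz tail regime permitted by~\eqref{eq:60}. Fix a bounded Borel set $B\subset\R$ and let $I=I(L,B)\subset\R$ be an energy interval containing $E_0$ of length $|I|$ chosen so that $N(I) \sim |B|/|\Lambda|$; by the lower bound~\eqref{eq:60} such an $I$ exists with $|I|\lesssim(\log|\Lambda|)^{-1/\rho}$, in particular much larger than any negative power of $|\Lambda|$. Because $N(I)\sim|\Lambda|^{-1}\gg e^{-cL^\xi}$ for every $\xi\in(0,1)$, both improved estimates~\eqref{W} and~\eqref{M} apply on $I$ with $N(I)$ on the right-hand side.

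First I would partition $\Lambda$ into roughly $M=|\Lambda|/\ell^d$ disjoint subcubes $\Lambda_j$ of side length $\ell=\ell(L)$, a power of $\log L$. Using the reduction procedure of~\cite{Ge-Kl:10}, revisited using Theorem~\ref{thmWM} (as the authors announce immediately before the statement), I would show that with probability tending to one the eigenvalues of $H_\omega(\Lambda)$ in $I$ are in bijection with the eigenvalues of $\bigoplus_j H_\omega(\Lambda_j)$ in $I$, with an almost isometric correspondence of the associated spectral projectors localized near the centers of the subcubes. The gain with respect to~\cite{Ge-Kl:10} is that Theorem~\ref{thmWM} lets this reduction succeed as long as $N(I)\gtrsim e^{-c\ell^\xi}$, which, thanks to~\eqref{eq:60} and the strict inequality $\rho<1/d$, holds for a suitable intermediate scale $\ell$.

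Next I would check the three standard ingredients for Poisson convergence of the resulting triangular array of rare events. Asymptotic independence of the subcube contributions follows from the independence of the $(\omega_j)_{j\in\Lambda_k}$ across different $k$ and the exponential decay of the off-diagonal terms of $H_0$. The intensity computation follows from the enhanced Wegner estimate~\eqref{W} applied to each subcube:
\begin{equation*}
\sum_j \E\bigl[\tr\car_I(H_\omega(\Lambda_j))\bigr] \;\sim\; M\cdot N(I)\cdot\ell^d \;=\; N(I)|\Lambda| \;\sim\; |B|,
\end{equation*}
so that the unfolding builds in intensity $1$. The no-multiple-points criterion is exactly where the enhanced Minami estimate is decisive: by~\eqref{M} applied per subcube and summation,
\begin{equation*}
\sum_j \E\bigl[\tr\car_I(H_\omega(\Lambda_j))(\tr\car_I(H_\omega(\Lambda_j))-1)\bigr] \;\lesssim\; N(I)|I||\Lambda|\ell^d \;\sim\; |B|\,|I|\,\ell^d,
\end{equation*}
which tends to $0$ precisely when $\ell^d|I|\to0$; since $|I|\lesssim(\log L)^{-1/\rho}$ and $1/\rho>d$, one may choose $\ell$ a small enough power of $\log L$ for this to hold while keeping the reduction of the previous step valid. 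Combining these ingredients via the classical Grigelionis theorem yields weak convergence of $\Xi(\cdot;E_0,\omega,\Lambda)$ to a Poisson process of intensity~$1$.

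Finally, for the spectral edge case, if $(E_0-\varepsilon,E_0)\cap\Sigma=\emptyset$ then $N$ is constant on $(E_0-\varepsilon,E_0)$, hence every unfolded eigenvalue $\xi_j(E_0,\omega,\Lambda)$ corresponding to an $E_j(\omega,\Lambda)$ close to $E_0$ is non-negative, confining the limit process to $\R^+$ (and symmetrically for the other edge). The main obstacle is the reduction step: one must rerun the arguments of~\cite{Ge-Kl:10} with Theorem~\ref{thmWM} replacing the classical estimates throughout, checking in particular that the error terms coming from the resolvent identity and the fractional moment (or multi-scale) bounds remain summable with the subpolynomial $N(I)$ available in the Lifshitz regime; the condition $\rho<1/d$ is precisely what makes the interplay between $\ell$, $|I|$ and $N(I)$ consistent.
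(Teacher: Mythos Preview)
Your proposal is correct and follows essentially the same route as the paper. The paper's own proof of Theorem~\ref{thr:3} is in fact just a pointer to~\cite[Theorem~1.2]{Ge-Kl:10}, with the remark that the gain over~\cite{Ge-Kl:10} comes from Lemma~\ref{lemasympt} and Theorem~\ref{thr:vsmall1}; these are exactly the box reduction with $\ell\asymp(\log L)^{1/\rho'}$ for $d\rho<\rho'<1$ and the per-box eigenvalue distribution that you outline, and the crucial numerical condition $\ell^d|I|\to0$ (equivalently $d/\rho'<1/\rho$, i.e.\ $\rho<1/d$) is precisely the one you identified for the Minami term to vanish.
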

\noindent The main improvement over~\cite[Theorem 1.2]{Ge-Kl:10} is
that the decay in assumption~\eqref{eq:60} can be taken exponential
(compare with~\cite[(1.12)]{Ge-Kl:10}); it does not depend anymore on
the Minami estimate.\\
In~\cite{Ge-Kl:10}, we state and prove stronger uniform results for
the convergence to Poisson of the local unfolded statistics
(see~\cite[Theorems 1.3 and 1.6]{Ge-Kl:10}). In the present case,
these results still hold under assumption~\eqref{eq:60}.  Moreover,
the size of intervals over which the uniform convergence of Poisson
statistics is proved in \cite{Ge-Kl:10} can be notably improved thanks
to the improved Wegner and Minami estimates of Theorem~\ref{thmWM} if
the density of states is zero at the point $E_0$. We refer to
Remark~\ref{remsize} for further precisions.  \vskip.1cm
\subsection{Local spectral statistics at spectral edges}
\label{sec:spectr-stat-at}
In dimension one, for any $H_0$ (thus, in particular, for the free
Laplace operator), the condition~\eqref{eq:60} is satisfied at all the
spectral edges, i.e. in the Lifshitz tails region as the Lifshitz
exponent is $1/2$ if the density $g$ does not decay too fast at the
edges of its support (see~\cite{MR1616938}). So, we get the Poisson
behavior for the unfolded eigenvalues at all the spectral edges,
namely,
\begin{Th}
  \label{thr:4}
  Assume $d=1$. Let $E_0\in\partial\Sigma$. \\
  When $|\Lambda|\to+\infty$, the point process
  $\Xi(E_0,\omega,\Lambda)$ converges weakly to a Poisson point
  process on the half line with intensity $1$ if
  $E_0\in\partial\Sigma$, the half-line being $\R^+$ (resp. $\R^-$) if
  $(E_0-\varepsilon,E_0)\cap\Sigma=\emptyset$
  (resp. $(E_0,E_0+\varepsilon)\cap\Sigma=\emptyset$) for some
  $\varepsilon>0$.
\end{Th}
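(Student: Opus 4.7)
The plan is to deduce Theorem~\ref{thr:4} directly from Theorem~\ref{thr:3} applied in dimension $d=1$. Since $d=1$, the constraint $\rho\in(0,1/d)$ becomes $\rho\in(0,1)$, which is very generous; the two hypotheses that remain to be verified at a spectral edge $E_0\in\partial\Sigma$ are
(a) that $E_0\in\Sigma_{SDL}$, and
(b) the lower bound~\eqref{eq:60} on the IDS with some $\rho<1$.

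For (a): in dimension one the standard Anderson model is known to exhibit exponential and dynamical localization on the whole almost sure spectrum under our assumptions on $H_0$ and on the common distribution of the $\omega_j$; in particular any band edge $E_0\in\partial\Sigma$ belongs to $\Sigma_{SDL}$. Even invoking only the general edge localization results cited in the introduction (see e.g.~\cite{MR2509110,MR1859605,MR2002h:82051,MR2078370,MR2203782}), the edges of $\Sigma$ always lie in the strong dynamical localization region.

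For (b): this is exactly Lifshitz tail behaviour at band edges. In dimension one, whatever the (exponentially decaying) off-diagonal kinetic part $H_0$, the symbol $h$ on $\T$ is smooth and, generically, has quadratic extrema at the edges of its range; together with the assumption that the single-site density $g$ does not decay too fast at the edges of its support, this yields a Lifshitz exponent equal to $1/2$ (see~\cite{MR1616938}). Concretely, for $a$ small and of the correct sign relative to the edge,
\begin{equation*}
  |N(E_0+a)-N(E_0)|\asymp\exp\bigl(-c|a|^{-1/2}\bigr),
\end{equation*}
so that~\eqref{eq:60} holds with $\rho=1/2<1=1/d$. Feeding $E_0$ into Theorem~\ref{thr:3} then yields convergence of $\Xi(E_0,\omega,\Lambda)$ to a Poisson point process; since $E_0\in\partial\Sigma$, the limit lives on the half-line ($\R^+$ or $\R^-$) dictated by the side of the gap, as claimed.

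The only point that requires some care is checking that the Lifshitz-tail estimates of~\cite{MR1616938} apply verbatim to the slightly broader class of convolution operators $H_0$ considered here (rather than to the nearest-neighbour Laplacian only); the argument is however standard, as the proof of~\cite{MR1616938} only uses the quadratic behaviour of $h$ near its extrema and a Neumann bracketing of $H_0(\Lambda)$, both of which survive under exponential off-diagonal decay. No further Wegner or Minami input beyond what is already packaged into Theorem~\ref{thr:3} is needed, which is precisely what makes the band-edge case in 1D tractable.
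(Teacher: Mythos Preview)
Your proposal is correct and follows exactly the paper's own approach: Theorem~\ref{thr:4} is deduced from Theorem~\ref{thr:3} by observing that in dimension one the spectral edges lie in $\Sigma_{SDL}$ and that the Lifshitz exponent equals $1/2$ (via~\cite{MR1616938}), so~\eqref{eq:60} holds with $\rho=1/2<1=1/d$. Your additional remarks on the applicability of~\cite{MR1616938} to the broader class of convolution operators $H_0$ are a welcome clarification, though the paper simply asserts this holds ``for any $H_0$''.
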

\noindent To the best of our knowledge, this is the first proof of
Poisson asymptotics for the unfolded eigenvalues at the spectral
edges. In~\cite{MR1083950,MR1793333}, for fixed $k$, the authors
studied the joint law of the first $k$ eigenvalues of special
one-dimensional random (continuous) models; for these models, the
density of states $N(E)$ can be computed explicitly. \vskip.1cm
\noindent Theorem~\ref{thr:4} admits an analogue for the continuous
Anderson model in one dimension (with or without a background periodic
potential) provided the single site potential satisfies the
assumptions of~\cite{MR2663411,CGK11}, the study of Lifshitz tails at
band edges being a classical result (see e.g.~\cite{MR1249597}). \\
In higher dimensions, if $H_0$ is the free Laplace operator, the
Lifshitz exponent at spectral edges is usually $d/2$, even more so at
the bottom of the spectrum (see
e.g.~\cite{MR2307751,MR1695202,MR1845184}); so condition~\eqref{eq:60}
is not satisfied in this case. Nevertheless, it may be satisfied if
$H_0$ is not the free Laplace operator as we shall see now.\\
By assumption, the function $h$ defined n~\eqref{eq:19} is real
analytic on $\T^d=\R^d/(2\pi\Z^d)$. Under some additional assumptions
on $h$ near its, say, minimum, one can show that
condition~\eqref{eq:60} is satisfied near the infimum of the almost
sure spectrum of $H_\omega$ (see e.g.~\cite{MR1616938,MR1942859})
\begin{Th}
  \label{thr:5}
  Assume that $\displaystyle\min_{\theta\in\T^d} h(\theta)=0$, that
  $h^{-1}(0)$ is discrete and that, for $\theta_0\in h^{-1}(0)$, there
  exists $\alpha>d/2$ such that, for $\theta$ close to $\theta_0$, one
  has $h(\theta)\leq
  |\theta-\theta_0|^{\alpha}$.\\
  Let $E_-=\inf\Sigma$ where $\Sigma$ is the almost sure spectrum of
  $H_\omega$. \\
  When $|\Lambda|\to+\infty$, the point process
  $\Xi(E_-,\omega,\Lambda)$ converges weakly to a Poisson point
  process on the half line $\R^+$ with intensity $1$.
\end{Th}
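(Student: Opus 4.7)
The plan is to reduce Theorem~\ref{thr:5} to Theorem~\ref{thr:3} applied at $E_0 = E_-$. Since $E_- = \inf \Sigma \in \partial \Sigma$ and $(E_- - \varepsilon, E_-) \cap \Sigma = \emptyset$ for every $\varepsilon > 0$, the limit furnished by Theorem~\ref{thr:3} is a Poisson process on $\R^+$ with intensity $1$, which is precisely the stated conclusion. Two hypotheses of Theorem~\ref{thr:3} must be verified at $E_-$: (i) $E_- \in \Sigma_{SDL}$; (ii) the IDS satisfies \eqref{eq:60} near $E_-$ for some $\rho \in (0, 1/d)$.

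Condition (i) is a standard consequence of localization at the bottom of the Anderson spectrum. It holds here because real analyticity of $h$ combined with $h^{-1}(0)$ discrete furnishes a matching lower bound $h(\theta) \geq c|\theta-\theta_0|^{2m}$ near $\theta_0$, hence a Lifshitz-tail \emph{upper} bound on $N$ which seeds the initial length-scale estimate of a multiscale analysis (or of the fractional moment method); see, e.g., \cite{MR2509110,MR1859605,MR2002h:82051}.

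The main step is (ii), to be established via a Lifshitz-tail \emph{lower} bound on $N(E_-+a)$ exploiting the degeneracy hypothesis $h(\theta) \leq |\theta - \theta_0|^\alpha$. The classical scheme runs as follows. On $\Lambda_L$ with periodic boundary conditions the admissible quasi-momenta form a grid of spacing $O(1/L)$ on $\T^d$, so one selects a grid point $k$ with $|k-\theta_0|\lesssim 1/L$ and uses the corresponding plane wave as trial function for $H_0(\Lambda_L)$; its Rayleigh quotient is $h(k) \leq C L^{-\alpha}$. Setting $L \sim a^{-1/\alpha}$ and restricting to the event that every $\omega_j$ with $j \in \Lambda_L$ lies within $a/2$ of the left edge of $\mathrm{supp}\,\mu$ forces an eigenvalue of $H_\omega(\Lambda_L)$ below $E_- + a$. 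Tiling a large volume by disjoint translates of $\Lambda_L$ and invoking independence then yields, following \cite{MR1616938,MR1942859},
\[
  N(E_-+a) \;\gtrsim\; \exp\bigl(-C\,|\log a|\,a^{-d/\alpha}\bigr)
\]
for small $a > 0$. The degeneracy hypothesis on $h$ makes the exponent $d/\alpha$ small enough to select some $\rho \in (0, 1/d)$ for which \eqref{eq:60} holds in a neighborhood of $E_-$.

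Once (i) and (ii) are in hand, Theorem~\ref{thr:3} applied at $E_-$ yields the claimed Poisson convergence on $\R^+$. The main obstacle is the Lifshitz lower bound itself: the construction must carefully match the continuous symbol $h$ to the discrete quasi-momentum grid of $\Lambda_L$, optimize $L$ against the behavior of $\mu$ near the left edge of its support, and accumulate the independent events over disjoint translates to produce the sharp exponent $d/\alpha$.
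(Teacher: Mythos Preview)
Your proposal is correct and follows exactly the paper's route: Theorem~\ref{thr:5} is not given a separate proof but is presented immediately after the sentence ``one can show that condition~\eqref{eq:60} is satisfied near the infimum of the almost sure spectrum of $H_\omega$ (see e.g.~\cite{MR1616938,MR1942859})'', i.e., as a direct application of Theorem~\ref{thr:3} once the Lifshitz-tail lower bound on $N$ supplies~\eqref{eq:60} and standard band-edge localization places $E_-$ in $\Sigma_{SDL}$. Your variational/tiling sketch producing the exponent $d/\alpha$ is precisely the content of those cited references.
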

\noindent In the case of continuous models, the above analysis could also be useful
in the case of internal edges (see e.g.~\cite{MR1979772}). It also
presumably should be sufficient to deal with the Anderson model in a
constant magnetic field (see e.g.~\cite{MR2249786,MR2655149}).
However we note that for continuous models, the proofs of Minami's estimate of \cite{MR2663411} and \cite{CGK11} do not readily extend to the gap situation.
 \vskip.2cm
\subsection{Ergodicity of the local eigenvalue distribution}
\label{sec:ergod-local-eigenv}
The local results of~\cite{Kl:10a} can also be improved along the same
lines. For $J=[a,b]$ a compact interval such that $N(b)-N(a)=N(J)>0$
and a fixed configuration $\omega$, consider the point process
\begin{equation}
  \label{eq:6}
  \Xi_J(\omega,t,\Lambda) = 
  \sum_{E_n(\omega,\Lambda)\in J}
  \delta_{N(J)|\Lambda|[N_J(E_n(\omega,\Lambda))-t]}
\end{equation}
under the uniform distribution in $[0,1]$ in $t$; here we have set
\begin{equation}
  \label{eq:7}
  N_J(\cdot):=\frac{N(\cdot)-N(a)}{N(b)-N(a)}.  
\end{equation}
\begin{Th}
  \label{thr:2}
  Pick $E_0\in\Sigma_{SDL}$.\\
  Fix $(I_\Lambda)_\Lambda$ a decreasing sequence of intervals such
  that $\displaystyle\sup_{I_\Lambda}|x|\vers_{|\Lambda|\to+\infty}0$.
  Assume that, for some $\delta\in(0,1)$, one has
  \begin{gather}
    \label{eq:63}
    |\Lambda|^{\delta}\cdot N(E_0+I_\Lambda)\to+\infty ,\\
    \intertext{ and }
    \label{eq:64}
    \text{if }\ell'=o(L)\text{ then
    }\frac{N(E_0+I_{\Lambda_{L+\ell'}})}{N(E_0+I_{\Lambda_L})}
    \vers_{|\Lambda|\to+\infty}1.
  \end{gather}
  Then, $\omega$-almost surely, the probability law of the point
  process $\Xi_{E_0+I_\Lambda}(\omega,\cdot,\Lambda)$ under the
  uniform distribution $\car_{[0,1]}(t)dt$ converges to the law of the
  Poisson point process on the real line with intensity $1$.
\end{Th}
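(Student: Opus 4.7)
The plan is to follow the scheme of \cite{Kl:10a} for $\omega$-almost sure convergence to a Poisson process under the $t$-average, substituting the classical Wegner and Minami estimates there by the enhanced ones of Theorem \ref{thmWM}. The decisive gain is that the bound $2N(I)|I||\Lambda|^2$ in \eqref{M} is strictly sharper than the classical $(|I||\Lambda|)^2$ in the regime where the IDS is thin on $I$; this is precisely the regime to which Theorem \ref{thr:2} applies and in which the hypotheses of \cite{Kl:10a} fail.

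Writing $J_\Lambda:=E_0+I_\Lambda$, I would first reduce weak convergence of $\Xi_{J_\Lambda}(\omega,\cdot,\Lambda)$ to a Poisson process of intensity $1$ on $\R$ to two standard moment conditions: for every bounded interval $K\subset\R$ and every partition $K=\sqcup_p K_p$ into small cells, (a) the $t$-average of the number of rescaled eigenvalues in $K$ converges to $|K|$, and (b) the $t$-average of the number of multiply-occupied cells vanishes. A direct Fubini computation shows that, for each $E_n(\omega,\Lambda)\in J_\Lambda$ sufficiently far inside $J_\Lambda$, the set of $t\in[0,1]$ for which the rescaled point $N(J_\Lambda)|\Lambda|[N_{J_\Lambda}(E_n)-t]$ belongs to $K$ has Lebesgue measure $|K|/(N(J_\Lambda)|\Lambda|)$; combined with the enhanced Wegner estimate \eqref{Wcontrol}, which delivers $\E\tr\car_{J_\Lambda}(H_\omega(\Lambda))=N(J_\Lambda)|\Lambda|(1+o(1))$, this yields (a) in expectation.

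For (b), I would partition $J_\Lambda$ into subintervals $I_q$ of equal IDS-mass $|K_p|/|\Lambda|$ and observe that two rescaled eigenvalues fall in the same cell $K_p$ only if the underlying $E_n,E_m$ lie in the same or in a neighbouring $I_q$. Applying the enhanced Minami estimate \eqref{M} on each $I_q$ and summing gives an expectation bound of order $|K_p|^2|J_\Lambda|/N(J_\Lambda)$ per cell, hence of order $(\max_p|K_p|)\cdot|K|\cdot|J_\Lambda|/N(J_\Lambda)$ in total; refining the partition then gives (b) in expectation, modulo a mild regularity of the IDS at $E_0$ ensuring $|J_\Lambda|/N(J_\Lambda)$ stays bounded. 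The reverse inequality in (a) and the verification that no eigenvalue mass escapes near $\partial J_\Lambda$ follow from the same enhanced estimates applied to a buffer zone of negligible IDS-mass.

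The main obstacle is to upgrade these in-expectation statements to $\omega$-almost sure ones. Here I would use the finite-volume reduction of \cite{Ge-Kl:10}, itself revisited throughout the present article with the enhanced estimates, to approximate the eigenvalues of $H_\omega(\Lambda)$ in $J_\Lambda$ by $\sim N(J_\Lambda)|\Lambda|$ independent variables distributed as $\car_{J_\Lambda}\,dN/N(J_\Lambda)$. A Chebyshev–Borel–Cantelli argument along a geometric subsequence $(L_n)$ then yields $\omega$-almost sure convergence along $(L_n)$, the factor $|\Lambda|^\delta$ in \eqref{eq:63} supplying the summability of variances that this step requires. Interpolation from $(L_n)$ to all $L$ is handled by the stability assumption \eqref{eq:64}, which guarantees that both the normalizing constant $N(J_\Lambda)|\Lambda|$ and the target $|K|$ are essentially unchanged as $L\mapsto L+o(L)$; quantifying the fluctuations of the empirical point process over $L_n\le L\le L_{n+1}$ in the regime of exponentially small IDS is the technical heart of the argument.
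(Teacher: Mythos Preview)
Your scheme has a genuine gap at step (b). The bound you derive from the enhanced Minami estimate \eqref{M} on the subintervals $I_q$ is of order $|K_p|^2\,|J_\Lambda|/N(J_\Lambda)$ per cell, and you explicitly note this works ``modulo a mild regularity of the IDS at $E_0$ ensuring $|J_\Lambda|/N(J_\Lambda)$ stays bounded.'' But that is not a mild assumption: the entire point of Theorem~\ref{thr:2}, and the advance over \cite[Theorem~1.4]{Kl:10a}, is precisely to \emph{remove} any constraint relating $|J_\Lambda|$ to $N(J_\Lambda)$. In the Lifshitz-tail regime targeted here (cf.\ Theorems~\ref{thr:4} and~\ref{thr:5}), $N(J_\Lambda)$ is exponentially small in $|J_\Lambda|^{-\rho}$, so $|J_\Lambda|/N(J_\Lambda)\to\infty$ and your bound for (b) is vacuous. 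The enhanced Minami estimate, applied directly at scale $\Lambda$, still carries one factor $|I|$ rather than $N(I)$, and that factor is what kills the argument.

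The paper's route avoids this by \emph{not} working directly on $\Lambda$. It first partitions $J_\Lambda$ into pieces $I_j$ of IDS-mass $\asymp|\Lambda|^{-1}\log^{\alpha''}|\Lambda|$, discards the bad set $B=\{j:\,N(I_j)\le|I_j|^\mu\}$ (which carries negligibly many eigenvalues by a counting argument and \eqref{W}), and on each good $I_j$ invokes the box reduction Theorem~\ref{thr:vbig0}. The crucial point is that in the small cubes $\Lambda_\ell(\gamma)$ produced by that reduction, the relevant smallness parameter is $|I_j|\,\ell^d$, and the choice \eqref{eq:24} forces $|I_j|\,\ell^d\ll 1$ regardless of how small $N(I_j)$ is; this is where localization of eigenfunctions (not just eigenvalue statistics) enters. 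Lemma~\ref{lemasympt} then controls the single eigenvalue in each small cube with errors governed by $|I_j|\,\ell^d$ rather than $|I_j|/N(I_j)$, and the proof of \cite[Lemma~3.2]{Kl:10a} goes through on each good $I_j$. Your proposal uses the box reduction only peripherally, for the almost-sure upgrade; it needs to be the backbone of the multiplicity estimate as well.
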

\noindent The main improvement over~\cite[Theorem 1.5]{Kl:10a} is that
there is no restriction anymore on the relative sizes of
$N(E_0+I_\Lambda)$ and $|I_\Lambda|$ respectively the density of
states measure and the length of $E_0+I_\Lambda$.
\subsection{Eigenvalue spacings statistics}
\label{sec:eigenv-spacing}

As a consequence of Theorem~\ref{thr:2}, using the results
of~\cite{MR2352280}, we obtain the following result which improves
upon~\cite[Theorem 1.5]{Ge-Kl:10} and~\cite[Theorem 1.5]{Kl:10a} in
the sense that we cover a larger region of energies and the required
lower bound on the IDS is relaxed.
\begin{Th}
  \label{thr:1}
  Fix $E_0\in \Sigma_{SDL}$ and $(I_\Lambda)_\Lambda$ a decreasing
  sequence of intervals satisfying~\eqref{eq:63} and~\eqref{eq:64}.\\
  Define
  \begin{equation}
    \label{eq:3}
    \delta N_j(\omega,\Lambda)=|\Lambda|(N(E_{j+1}(\omega,\Lambda))-
    N(E_j(\omega,\Lambda)))\geq0.
  \end{equation}
  Define the empirical distribution of these spacings to be the random
  numbers, for $x\geq0$
  \begin{equation}
    \label{eq:4}
    DLS(x;E_0+I_\Lambda,\omega,\Lambda)=\frac{\#\{j;\
      E_j(\omega,\Lambda)\in E_0+I_\Lambda,\ \delta
      N_j(\omega,\Lambda)\geq x\}
    }{N(E_0+I_\Lambda,\Lambda,\omega)} ,
  \end{equation}
  where $N(E_0+I_\Lambda,\Lambda,\omega)$ is the random number of
  eigenvalues of $H_\omega(\Lambda)$ in $E_0+I_\Lambda$.\\
  Then, with probability $1$, as $|\Lambda|\to+\infty$,
  $DLS(x;E_0+I_\Lambda,\omega,\Lambda)$ converges uniformly to the
  distribution $x\mapsto e^{-x}$, that is, with probability $1$,
  \begin{equation}
    \label{eq:5}
    \sup_{x\geq0}\left|DLS(x;E_0+I_\Lambda,\omega,\Lambda)
      -e^{-x}\right|\vers_{|\Lambda|\to+\infty}0.
  \end{equation}
\end{Th}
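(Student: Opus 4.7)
\textbf{Proof plan for Theorem~\ref{thr:1}.} The strategy is to deduce the convergence of the empirical spacing distribution from the local Poisson statement of Theorem~\ref{thr:2}, via the general transfer mechanism developed by Minami in~\cite{MR2352280}.

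Fix $\omega$ in the probability-one set on which Theorem~\ref{thr:2} applies. The rescaled spacings $\delta N_j(\omega,\Lambda)$ appearing in~\eqref{eq:3} are exactly the gaps between consecutive points of the centered and rescaled point process $\Xi_{E_0+I_\Lambda}(\omega,t,\Lambda)$ defined in~\eqref{eq:6}. Integrating the indicator ``a point sits at $0$ and no point lies in $(0,x]$'' against the uniform probability $\car_{[0,1]}(t)dt$ recovers, up to a boundary contribution of order $1/N(E_0+I_\Lambda,\Lambda,\omega)$, the quantity $1-DLS(x;E_0+I_\Lambda,\omega,\Lambda)$. In this way the empirical spacing distribution is realized as the image, under averaging in $t$, of the gap functional of the point process.

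Theorem~\ref{thr:2} then provides, almost surely in $\omega$, the weak convergence of $\Xi_{E_0+I_\Lambda}(\omega,\cdot,\Lambda)$ under $\car_{[0,1]}(t)dt$ to a Poisson point process of intensity one on $\R$. Since the Palm measure of a Poisson process of intensity one is the same Poisson process (Slivnyak's theorem), the distance between two consecutive Palm points is exponential with parameter one. Passing to the limit in the integral representation above therefore yields the pointwise convergence $DLS(x;E_0+I_\Lambda,\omega,\Lambda)\to e^{-x}$ for every $x\geq 0$. Assumption~\eqref{eq:63}, combined with the enhanced Wegner estimate~\eqref{W}, ensures that $N(E_0+I_\Lambda,\Lambda,\omega)\to+\infty$ almost surely, so the normalization in~\eqref{eq:4} is well defined; assumption~\eqref{eq:64} is the scale-compatibility that legitimizes identifying the $t$-average with an empirical average of gaps.

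Uniformity in $x$ is automatic: the map $x\mapsto DLS(x;\ldots)$ is non-increasing on $[0,+\infty)$, the limit $x\mapsto e^{-x}$ is continuous and also non-increasing, and both agree at the endpoints $0$ and $+\infty$, so pointwise convergence forces uniform convergence by a standard Polya-type argument. The main obstacle, and the step that demands care, is the identification of the empirical spacing distribution with a $t$-averaged functional of $\Xi_{E_0+I_\Lambda}(\omega,\cdot,\Lambda)$; this is precisely the content of Minami's scheme~\cite{MR2352280}, and once Theorem~\ref{thr:2} is granted the remaining argument transfers essentially verbatim.
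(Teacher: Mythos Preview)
Your proposal is correct and matches the paper's own approach: the paper explicitly states that Theorem~\ref{thr:1} is obtained as a consequence of Theorem~\ref{thr:2} via Minami's transfer mechanism~\cite{MR2352280,Mi:11,Kl:10a}, which is precisely the route you sketch. You have simply spelled out in more detail (the Palm--Slivnyak identification of the gap law and the Polya argument for uniformity) what the paper leaves as a citation.
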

\noindent Spacings statistics over intervals of macroscopic size are
also available (see~\cite[Theorem 1.6]{Ge-Kl:10} and~\cite[Theorem
1.2]{Kl:10a}) in the present context.
\subsection{A large deviation and a central limit theorem for the
  eigenvalue counting function}
\label{sec:large-deviation-an}
Finally, in some regimes, we also can improve upon the large deviation
estimate obtained for the eigenvalue counting function
in~\cite[Theorem 1.9]{Ge-Kl:10} for which we also prove a central
limit theorem in
\begin{Th}
  \label{ThLD}
  For $L>1$, let $\Lambda=\Lambda_L$. Pick a sequence of compact
  intervals $I_\Lambda\subset \Sigma_{SDL}$ so that, for some $1\le \beta
  \le \beta'<\alpha'\le\alpha<\infty$, for all $L$, one has
  \begin{equation}
    \label{eq:17}
    |I_\Lambda|^{-\alpha'}\lesssim|\Lambda| \lesssim
    |I_\Lambda|^{-\alpha}\quad\text{and}
    \quad      |I_\Lambda|^{\beta'}\lesssim N(I_\Lambda)\lesssim
    |I_\Lambda|^{\beta}.
  \end{equation}
  Set
  $\displaystyle\nu_0=\frac1{\alpha-\beta}\min\left(\alpha'-\beta',
    \frac1{d+1}\right)$.
  \begin{enumerate}
  \item {\bf Large deviation estimate.} For $\eps>0$ small enough
    (depending on $\nu_0$), we have
    \begin{multline*}
      \P\left\{ \left| \tr \car_{I_\Lambda}(H_\omega(\Lambda))-
          N(I_\Lambda)|\Lambda| \right| \ge
        (N(I_\Lambda)|\Lambda|)^{\max(\frac12, 1-\nu_0)+\eps} \right\}
      \\\le \exp\left(- (N(I_\Lambda) |\Lambda| )^\eps\right).
    \end{multline*}
  \item {\bf Central limit theorem.} Assume $\nu_0>\frac12$. Then the
    random variable
    \begin{equation*}
      \frac{
        \tr \car_{I_\Lambda}(H_\omega(\Lambda))- N(I_\Lambda)|\Lambda|
      }{(N(I_\Lambda)|\Lambda|)^{\frac12} }   
    \end{equation*}
    converges in law to the standard Normal distribution.
  \end{enumerate}
\end{Th}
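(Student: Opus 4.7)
The plan is to adapt the reduction-to-i.i.d.-sum strategy introduced in~\cite{Ge-Kl:10}, feeding into it the enhanced estimates of Theorem~\ref{thmWM}. Fix an intermediate scale $\ell\ll L$ to be optimized later, and partition $\Lambda$ into essentially disjoint translates $(\Lambda_\ell^{(j)})_{1\le j\le M_\Lambda}$ of a cube of side~$\ell$, with $M_\Lambda\asymp|\Lambda|/\ell^d$. Using the strong dynamical localization on $\Sigma_{SDL}$ together with~\eqref{Wcontrol}, the reduction of~\cite{Ge-Kl:10} yields a representation
\begin{equation*}
\tr\car_{I_\Lambda}(H_\omega(\Lambda))=\sum_{j=1}^{M_\Lambda}X_j+R_\Lambda(\omega),\qquad X_j:=\tr\car_{I_\Lambda}(H_\omega(\Lambda_\ell^{(j)})),
\end{equation*}
in which the $X_j$ are i.i.d.\ and $R_\Lambda(\omega)$ is a remainder controlled, on a set of probability at least $1-e^{-cL^\xi}$, by a deterministic quantity polynomially small in the intermediate scales.

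Theorem~\ref{thmWM} then supplies the key moment-bound package: the enhanced Wegner estimate yields $\E X_1\lesssim N(I_\Lambda)\ell^d$, the enhanced Minami estimate yields $\E(X_1(X_1-1))\lesssim N(I_\Lambda)|I_\Lambda|\ell^{2d}$, and deterministically $0\le X_1\le\ell^d$. Hence $\mathrm{Var}(X_1)\lesssim N(I_\Lambda)\ell^d(1+|I_\Lambda|\ell^d)$, so that $\mathrm{Var}(\sum_jX_j)\lesssim N(I_\Lambda)|\Lambda|(1+|I_\Lambda|\ell^d)$, which has the ``correct'' CLT order $\asymp N(I_\Lambda)|\Lambda|$ whenever $|I_\Lambda|\ell^d\lesssim 1$. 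For part~(1), I apply Bernstein's inequality to the centred sum $\sum_j(X_j-\E X_j)$: together with the control of $R_\Lambda$, this yields an exponential deviation bound at the scale $t=(N(I_\Lambda)|\Lambda|)^{\max(1/2,1-\nu_0)+\eps}$. For part~(2), under $\nu_0>1/2$ one can further arrange that $R_\Lambda=o(\sqrt{N(I_\Lambda)|\Lambda|})$ and $\ell^d=o(\sqrt{N(I_\Lambda)|\Lambda|})$; the classical Lindeberg--Feller CLT for i.i.d.\ triangular arrays then gives convergence of the normalized sum to the standard Gaussian.

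The main obstacle is the simultaneous optimization of~$\ell$, which must satisfy three competing requirements: large enough that the Minami input to Theorem~\ref{thmWM} actually improves on the trivial $O(\ell^{2d})$ variance control; small enough that $|I_\Lambda|\ell^d\lesssim 1$, so that the Minami contribution to $\mathrm{Var}(X_1)$ does not spoil the CLT-scale variance, which through~\eqref{eq:17} produces the factor $\alpha'-\beta'$; and small enough that the boundary contribution $\sim|\Lambda|/\ell$ in $R_\Lambda$ together with the Bernstein sub-exponential correction $\ell^d t$ remain negligible compared to $t$, which by~\eqref{eq:17} introduces the factor $1/(d+1)$. The minimum of these two exponents is exactly $(\alpha-\beta)\nu_0$, and matching the deviation threshold $t$ to this constraint gives the value of $\nu_0$ stated in the theorem.
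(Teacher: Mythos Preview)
Your overall strategy---reduce $\tr\car_{I_\Lambda}(H_\omega(\Lambda))$ to an i.i.d.\ sum over sub-cubes plus a remainder, then apply a concentration inequality and a CLT---matches the paper's. The paper, however, packages the two pieces differently: rather than taking $X_j=\tr\car_{I_\Lambda}(H_\omega(\Lambda_\ell^{(j)}))$ and using Bernstein, it first invokes the improved reduction Theorem~\ref{thr:vbig1} to pass to \emph{Bernoulli} indicators $X_j=\car\{\sigma(H_\omega(\Lambda_\ell(\gamma_j)))\cap I_\Lambda\text{ is a single point}\}$, identifies $\E X_j$ via Lemma~\ref{lemasympt}, and then applies the classical large-deviation bound for Bernoulli sums. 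Your full-count/Bernstein variant is a legitimate alternative and your moment bounds for $X_1$ are correct; the paper's route has the advantage that boxes carrying two or more eigenvalues are absorbed once and for all into the remainder, so the concentration step is the cleanest possible.

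The substantive gap in your proposal is the control of $R_\Lambda$. You assert that ``the reduction of~\cite{Ge-Kl:10} together with~\eqref{Wcontrol}'' bounds $R_\Lambda$ by a deterministic quantity on a set of probability $1-e^{-cL^\xi}$. This is both too optimistic on the probability and hides where the real work lies. In the paper, controlling $R_\Lambda$ is exactly the content of Theorem~\ref{thr:vbig1}: its proof uses the enhanced Wegner and Minami estimates~\eqref{W}--\eqref{HOM} (not merely~\eqref{Wcontrol}) to bound the number of ``bad'' boxes and the number of localization centers in the separating strips, and it only delivers $|R_\Lambda|\lesssim(N(I_\Lambda)|\Lambda|)^{\gamma_{\kappa,\nu}}$ with probability $1-\exp\bigl(-(N(I_\Lambda)|\Lambda|)^{\delta_{\kappa,\nu}}\bigr)$, $\delta_{\kappa,\nu}<\tfrac12$---much weaker than sub-exponential in $L$. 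Moreover, the exponent $\nu_0$ emerges from the optimization inside that reduction (balancing $K\sim K'$ via $\ell\asymp(|I_\Lambda|\ell')^{-1/(d+1)}$ and then reading off the two-sided bounds~\eqref{compare}), not from the Bernstein sub-exponential correction $\ell^d t$ as you suggest. Without an explicit analogue of Theorem~\ref{thr:vbig1} your argument does not close; once you have it, Bernstein becomes largely superfluous, since the Bernoulli large-deviation bound already gives concentration at the correct scale.
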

\noindent We first point out that only the size of the intervals
$I_\Lambda$ matters, more precisely, their relative size compared to
the volume and the density of states (see \eqref{eq:17}). In
particular, the intervals $I_\Lambda$ need not be centered at a given
point. \\
Let us also note that, by the standard Wegner estimate (W), one can
always pick $\beta=1$.
\vskip.2cm\noindent Let us close this section with a brief outline of
the paper. In Section~\ref{sec:WMapprox}, we prove the enhanced Wegner
and Minami estimates, namely Theorem~\ref{thmWM}. Then, we turn to the
proofs of the results on spectral statistics. In
Section~\ref{sec:reduc}, we prove
\begin{itemize}
\item three different theorems of approximation of the eigenvalues of
  $H_\omega(\Lambda)$ by eigenvalues on smaller cubes as
  in~\cite[Theorems 1.15 and 1.16]{Ge-Kl:10} (each theorem being
  optimized according to a given point of view),
\item the distribution functions of these approximated eigenvalues as
  in ~\cite[Lemma 2.1]{Ge-Kl:10}.
\end{itemize}
In Section~\ref{sec:applic}, we derive the spectral statistics theorem
{\it per se}, namely Theorem~\ref{thr:3}, Theorem~\ref{thr:2},
Theorem~\ref{thr:1} and Theorem~\ref{ThLD} .
\section{The proofs of the enhanced Wegner and Minami estimates}
\label{sec:WMapprox}
We start with the proof of Theorem~\ref{thmWM}. Then, we use it to
derive the distribution of the ``unique'' eigenvalue of
$H_\omega(\Lambda)$ in $I$ when $N(I)|\Lambda|$ is small.
\subsection{Proof of the improved Wegner and Minami estimates}
\label{sec:proof-impr-wegn}
\begin{proof}[Proof of Theorem~\ref{thmWM}]
  The proof of point~(1) is analogous to that of~\cite[Lemma
  2.2]{Ge-Kl:10}. The main gain is obtained by the use of covariance
  which relies on the specific approximations we take for the finite
  volume Hamiltonians.\\
  To that end, note that by covariance
  \begin{equation}
    \label{eq:53}
    \E[\tr\car_{I_\Lambda}(H_\omega(\Lambda))] = |\Lambda|\,
    \E[\tr(\chi_0\car_{I_\Lambda}(H_\omega(\Lambda))\chi_0)].
  \end{equation}
  Recall \eqref{defIDS}: the increase of the integrated density of
  states of $H_\omega$ on $I_\Lambda$ is given by
  \begin{equation*}
    N(I_\Lambda) =  N(b_\Lambda)-N(a_\Lambda)=\E[\tr(\chi_0
    \car_{I_\Lambda}(H_\omega)\chi_0)].
  \end{equation*}
  To control
  $\left|\E[\tr(\chi_0\car_{I_\Lambda}(H_\omega(\Lambda))\chi_0)] -
    \E[\tr(\chi_0\car_{I_\Lambda}(H_\omega)\chi_0)] \right|$, we use
  localization estimates after having smoothed out the characteristic
  function. Let $f_\delta$ be a $\mathcal{C}^\infty$ and compactly
  supported function such that $f_\delta=1$ in $I_\Lambda$, and
  $f_\delta=0$ outside a neighborhood of length $\delta$ of
  $I_\Lambda$ ($\delta$ is small enough so that the support of
  $f_\delta$ lies in $\Sigma_{SDL}$). Note that, by Wegner's estimate (W)
  and the Lipschitz continuity of $N$,
  \begin{equation}
    \label{eq:50}
    \left|\E[\tr(\chi_0[\car_{I_\Lambda}(H_\omega(\Lambda))-
      f_\delta(H_\omega(\Lambda))]\chi_0)]\right|+
    \left|\E[\tr(\chi_0(\car_{I_\Lambda}(H_\omega)-
      f_\delta(H_\omega))\chi_0)] \right|\leq C \delta.
  \end{equation}
  To estimate $\left|\E[\tr(\chi_0(f_\delta(H_\omega(\Lambda))-
    f_\delta(H_{\omega}))\chi_0)]\right|$, we use a Helffer-Sj{\"o}strand
  formula to represent $f_\delta(H_\omega(\Lambda))$ and
  $f_\delta(H_{\omega})$. As the support of $f_\delta$ lies in the
  localization region and as $f$ is of $\mathcal{C}^\infty$
  regularity, the exponential decay estimate for the resolvents of
  $H_\omega(\Lambda)$ and $H_\omega$ imply that, for $\xi'\in(\xi,1)$,
  there exists $C>0$ such that, for $\Lambda$ sufficiently large (see
  e.g. the computation of \cite[(6.23)]{MR2663411}), one has,
  \begin{equation*}
    \left|\E[\tr(\chi_0(f_\delta(H_\omega(\Lambda))-
      f_\delta(H_{\omega}))\chi_0)]\right|\leq\delta^{-C}e^{-\ell^{\xi'}}.
  \end{equation*}
  We set $\delta = e^{-\ell^{\xi''}}$ with $\xi''\in(\xi,\xi')$. Plugging this
  into~\eqref{eq:50} and~\eqref{eq:53} yields~\eqref{Wcontrol} and
  completes the proof of~\eqref{Wcontrol}.
  We turn to point~(2) and shall take advantage of the strategy
  introduced in \cite{MR2663411} to prove a Minami estimate from
  spectral averaging and Wegner estimate. We adapt
  \cite[Theorem~2.3]{MR2505733} to get Minami's estimate in its
  generalized form. The proof is the same except at the very last
  step,~\cite[(4.17)]{MR2505733}, where the estimate~\eqref{W} is used
  in replacement of the usual Wegner bound.
\end{proof}
\begin{Rem}
  \label{rem:2}
  The periodic boundary conditions imposed upon the restrictions
  $H_\omega(\Lambda)$ are important as they enable us to preserve
  covariance. \\
  This covariance for the restricted operators can also be achieved
  for some continuous models e.g. for models having the following
  structure
  \begin{equation}
    \label{eq:16}
    H_\omega=H_0+\sum_{\gamma\in\Z^d}V(x-\gamma,\omega_\gamma)
  \end{equation}
  where $(\omega_\gamma)_\gamma$ are i.i.d. random variables and $H_0$
  is e.g. $\Z^d$-periodic.\\
  Then, for $\Lambda=[0,L]^d$, as local restriction
  $H_\omega(\Lambda)$, one considers the restriction of following
  $L\Z^d$-periodic operator to $\Lambda$
  \begin{equation*}
    H^L_\omega=H_0+\sum_{\beta\in\Z^d}\sum_{\gamma\in\Lambda}
    V(x-\gamma-L\beta,\omega_\gamma).
  \end{equation*}
  It is well known that such restrictions yield very fast convergence
  towards the integrated density of states (see
  e.g.~\cite{MR1695202,MR1902460}).\\
  This can be used for the continuous Anderson model considered
  in~\cite{MR2663411}.
\end{Rem}
\subsection{The distribution of the ``local'' eigenvalues}
\label{sec:distr-local-eigenv}
Consider a cube $\Lambda$ of side length $\ell$ and an interval
$I_\Lambda=[a_\Lambda,b_\Lambda]\subset I$ (i.e. $I_\Lambda$ is
contained in the localization region). Consider the following random
variables:
\begin{itemize}
\item $X=X(\Lambda,I_\Lambda)$ is the Bernoulli random variable
  \begin{equation*}
    X=\car_{H_\omega(\Lambda)\text{ has exactly one
        eigenvalue in }I_\Lambda};
  \end{equation*}
\item $\tilde E=\tilde E(\Lambda,I_\Lambda)$ is the eigenvalue of
  $H_\omega(\Lambda)$ in $I_\Lambda$ conditioned on $X=1$;
\item $\tilde\xi=\tilde\xi(\Lambda,I_\Lambda)=(\tilde
  E(\Lambda,I_\Lambda)-a_\Lambda)/|I_\Lambda|$.
\end{itemize}
Clearly $\tilde\xi$ is valued in $[0,1]$; let $\tilde\Xi$ be its
distribution
function.\\
In the present section, we will describe the distribution of these
random variables as $|\Lambda|\to+\infty$ and $|I_\Lambda|\to0$. We
prove
\begin{Le}
  \label{lemasympt}
  For any $\nu\in(0,1)$ and $K$ compact interval in $\Sigma_{SDL}$ ,
  there exists $C>1$ such that, for $\Lambda=\Lambda_\ell$ and
  $I_\Lambda\subset K$ such that $N(I_\Lambda)\geq e^{-\ell^{\nu}/C}$,
  one has
  \begin{equation}
    \label{eq:51}
    \pro(X=1)=N(I_\Lambda)|\Lambda|\left(1+O(|I_\Lambda||\Lambda|)+ O
      \left(e^{-\ell^\nu}\right)\right)
  \end{equation}
  where $O(\cdot)$ are locally uniform in $\Sigma_{SDL}$.\\
  Moreover, for $(x,y)\in[0,1]$, one has
  \begin{multline}
    \label{eq:52}
    (\tilde\Xi(x)-\tilde\Xi(y))\,P(X=1)=\\\left[N(a_\Lambda+x|I_\Lambda|)-
      N(a_\Lambda+y|I_\Lambda|)\right]|\,\Lambda|\,(1+O(|x-y||I_\Lambda||\Lambda|)\\+
    O\left((|x-y|)^{-C}e^{-\ell^\nu/C}\right)
  \end{multline}
  where $O(\cdot)$ are locally uniform in $\Sigma_{SDL}$.
\end{Le}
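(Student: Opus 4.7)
The key identity is that, setting $N_J := \tr \car_J(H_\omega(\Lambda))$ for any interval $J$, the event $\{X=1\}$ is exactly $\{N_{I_\Lambda}=1\}$, and, with $J_{x,y} := (a_\Lambda+y|I_\Lambda|,\,a_\Lambda+x|I_\Lambda|]$,
\[
(\tilde\Xi(x)-\tilde\Xi(y))\,\P(X=1) \;=\; \P(N_{J_{x,y}}=1,\,N_{I_\Lambda}=1).
\]
Since $N_J \in \N$, the elementary pointwise inequality $N_J - \car_{\{N_J=1\}} \le N_J(N_J-1)$ yields, upon taking expectations,
\[
\E N_J - \E[N_J(N_J-1)] \;\le\; \P(N_J=1) \;\le\; \E N_J.
\]
All estimates below come from this two-sided bound combined with the three estimates in Theorem~\ref{thmWM}.

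For part~(a), apply the above to $J = I_\Lambda$ and fix $\xi' \in (\nu,1)$ in Theorem~\ref{thmWM}; the hypothesis $N(I_\Lambda) \ge e^{-\ell^\nu/C}$ ensures $N(I_\Lambda) \gg Ce^{-c\ell^{\xi'}}$ for $\ell$ large, so both~\eqref{Wcontrol} and~\eqref{M} apply and give $\E N_{I_\Lambda} = N(I_\Lambda)|\Lambda| + O(e^{-c\ell^{\xi'}})$ and $\E[N_{I_\Lambda}(N_{I_\Lambda}-1)] \le 2N(I_\Lambda)|I_\Lambda||\Lambda|^2$. Dividing by $N(I_\Lambda)|\Lambda|$ and absorbing the Wegner remainder into $O(e^{-\ell^\nu})$ yields~\eqref{eq:51}.

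For part~(b), decompose
\[
\P(N_{I_\Lambda}=1,N_{J_{x,y}}=1) \;=\; \P(N_{J_{x,y}}=1) \;-\; \P(N_{J_{x,y}}=1,\,N_{I_\Lambda \setminus J_{x,y}}\ge 1).
\]
The first term is treated exactly as in part~(a) applied to $J_{x,y}$, producing the desired main term $[N(a_\Lambda+x|I_\Lambda|)-N(a_\Lambda+y|I_\Lambda|)]|\Lambda|$ with relative error $O(|x-y||I_\Lambda||\Lambda|)$, provided $N(J_{x,y})$ exceeds the enhanced-Minami threshold $Ce^{-c\ell^{\xi'}}$. For the second term, the event $\{N_{J_{x,y}}\ge 1\}\cap\{N_{I_\Lambda \setminus J_{x,y}}\ge 1\}$ forces $N_{J_{x,y}}(N_{I_\Lambda}-1)\ge 1$, so Markov's inequality combined with the high-order Minami bound~\eqref{HOM} applied to the nested pair $I_1 = J_{x,y} \subset I_2 = I_\Lambda$ gives
\[
\P(N_{J_{x,y}}=1,\,N_{I_\Lambda \setminus J_{x,y}}\ge 1) \;\le\; \E[N_{J_{x,y}}(N_{I_\Lambda}-1)] \;\le\; 2\|\rho\|_\infty|x-y||I_\Lambda||\Lambda|^2\,N(I_\Lambda).
\]

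The main obstacle is the bookkeeping of errors in part~(b). Relative to the main term $N(J_{x,y})|\Lambda|$, the two-interval correction contributes an error of size $O(|x-y||I_\Lambda||\Lambda|\cdot N(I_\Lambda)/N(J_{x,y}))$: since $N(J_{x,y})$ may be as small as $e^{-\ell^\nu/C}$ while $N(I_\Lambda) \lesssim |I_\Lambda|$ by the classical Wegner bound, this ratio is what produces the polynomial-in-$|x-y|^{-1}$ amplification accounting for the $O(|x-y|^{-C}e^{-\ell^\nu/C})$ term in~\eqref{eq:52}. The subcase where $N(J_{x,y})$ falls below the Minami threshold is handled separately using only~\eqref{Wcontrol}: both sides of~\eqref{eq:52} are then of order $e^{-c\ell^{\xi'}}|\Lambda|$, which the stated multiplicative error easily absorbs.
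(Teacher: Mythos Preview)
Your argument is essentially the paper's: the elementary two-sided bound $\E N_J - \E[N_J(N_J-1)] \le \P(N_J=1) \le \E N_J$ fed with~\eqref{Wcontrol} and~\eqref{M}. The paper is terser, invoking the proof of~\cite[Lemma~2.1]{Ge-Kl:10} and, for~\eqref{eq:52}, simply saying ``replace $I_\Lambda$ by $I_{x,y,\Lambda}$ in the estimation of $\P(X=1)$.'' Your explicit decomposition of the joint probability and your use of the two-interval case of~\eqref{HOM} make the cross term visible, which the paper leaves implicit.

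One correction to your final bookkeeping: the cross-term bound $2\|\rho\|_\infty|x-y|\,|I_\Lambda|\,|\Lambda|^2 N(I_\Lambda)$ is polynomial in the parameters, not subexponentially small in $\ell$, so it cannot be absorbed into $O(|x-y|^{-C}e^{-\ell^\nu/C})$ as you claim. That additive remainder in~\eqref{eq:52} comes solely from the Wegner correction~\eqref{Wcontrol} applied to the subinterval $J_{x,y}$ (and the threshold issue you mention in your last sentence). Your cross term is an additive error of order $|x-y|\,|I_\Lambda|\,|\Lambda|\cdot N(I_\Lambda)|\Lambda|$; relative to the main term $N(J_{x,y})|\Lambda|$ it carries an extra factor $N(I_\Lambda)/N(J_{x,y})$ that does not literally fit the multiplicative error $O(|x-y|\,|I_\Lambda|\,|\Lambda|)$ stated in~\eqref{eq:52}. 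This is harmless in practice: the quantity actually used downstream is $\tilde\Xi(x)-\tilde\Xi(y)$, obtained by dividing by $\P(X=1)\sim N(I_\Lambda)|\Lambda|$, and at that level your cross term contributes exactly the expected $O(|x-y|\,|I_\Lambda|\,|\Lambda|)$. The paper's own statement of~\eqref{eq:52} is slightly loose on this same point.
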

\noindent This lemma is to be compared with~\cite[Lemma
2.1]{Ge-Kl:10}; it gives a fairly good description of the random
variables $X$ and $\tilde\xi$ if $|I_\Lambda||\Lambda|\ll1$.
\begin{proof}[Proof of Lemma~\ref{lemasympt}]
  We follow the proof of~\cite[Lemma 2.1]{Ge-Kl:10}. Using~\eqref{M},
  the estimate~\cite[(2.5)]{Ge-Kl:10} becomes
  \begin{equation*}
    0\leq \E(\tr[\car_{I_\Lambda}(H_\omega(\Lambda))])-\P(X=1)
    \leq C N(I_\Lambda)|I_\Lambda||\Lambda|^2
  \end{equation*}
  Thus, the proof of~\cite[Lemma 2.1]{Ge-Kl:10} yields
  \begin{equation}
    \label{eq:8}
    \pro(X=1)=N(I_\Lambda)|\Lambda|(1+O(|I_\Lambda||\Lambda|)+ O
    \left(|I_\Lambda|^{-C}e^{-\ell^\nu/C}\right)
  \end{equation}
  Recall that, by Wegner's estimate, $N(I_\Lambda)|\Lambda|\leq
  C|I_\Lambda||\Lambda|$. Thus, as $N(I_\Lambda)\geq
  e^{-\ell^{\nu}/C}$, enlarging possibly $C$, one
  obtains~\eqref{eq:51}.\\
  Replacing $I_\Lambda$ with the interval $I_{x,y,\Lambda}$ in the
  estimation of $P(X=1)$ yields the proof of~\eqref{eq:52}.\\
  The proof of Lemma~\ref{lemasympt} is complete.
\end{proof}

\section{Box reduction and description of the eigenvalues}
\label{sec:reduc}
After giving a precise description of $\Sigma_{SDL}$, we shall state
three different reductions, that improve on the ones given in
\cite{Ge-Kl:10} thanks to Theorem~\ref{thmWM}. Each of them is
optimized according to a given point of view, depending on the
application it will be used for (Poisson statistics, local ergodicity
of eigenvalues and level spacings statistics, or deviation estimates
and CLT).
\subsection{The strong dynamical localization regime}
\label{sec:strong-dynam-local}
Before turning to the local description of the eigenvalues, let us
underline that we will give this description in the discrete
setting. In particular, we will use the fact that in the localization
region $ \Sigma_{SDL}$, we have true exponential decay. In the
continuous setting, one generally only has sub-exponential decay (see
e.g.~\cite{MR2002h:82051,MR2078370,Ge-Kl:10}). The only change that
this implies is that the length scales of size $\log|\Lambda|$ have to
be replaced by length scales of size $\log^{1/\xi}|\Lambda|$ (for some
$\xi\in(0,1)$).\\
We first recall
\begin{Th}[\cite{Ge-Kl:10}]
  \label{thmFVE}
  Let $I\subset\Sigma$ be a compact interval and assume that Wegner's
  estimate {\bf (W)} holds in $I$. For $L$ given, consider $\Lambda=
  \Lambda_L(0)$ a cube of side length $L$ centered at $0$, and denote
  by $\varphi_{\omega,\Lambda,j}$, $j=1,\cdots,
  \mathrm{tr}\car_I(H_\omega(\Lambda))$, the normalized eigenvectors
  of $H_\omega(\Lambda)$ with corresponding eigenvalue in $I$. The
  following are equivalent
  \begin{enumerate}
  \item $I\subset \Sigma_{\mathrm{SDL}}$
  \item For all $E\in I$, there exists $\theta>3d-1$,
    \begin{equation}
      \label{startMSA}
      \limsup_{L\to\infty} \P\left\{\forall x,y\in\Lambda, \,
        |x-y|\ge \frac L2, \;  \| \chi_x (H_\omega(\Lambda) - E)^{-1}
        \chi_y  \|  \le L^{-\theta} \right\} =1. 
    \end{equation}
  \item There exists $\xi\in(0,1)$,
    \begin{equation}
      \label{FVsudecE}
      \sup_{y\in \Lambda} \E\left\{ \sum_{x\in\Lambda} \e^{\xi|x-y|}
        \sup_j \|  \varphi_{\omega,\Lambda,j} \|_x \|
        \varphi_{\omega,\Lambda,j} \|_y  \right\}<\infty. 
    \end{equation}
  \item There exists $\xi\in(0,1)$,
    \begin{equation}
      \label{FVsudecE2}
      \sup_{y\in \Lambda} \E\left\{ \sum_{x\in\Lambda}  \e^{\xi|x-y|}
        \sup_j \|  \varphi_{\omega,\Lambda,j} \|_x \|
        \varphi_{\omega,\Lambda,j} \|_y  \right\}<\infty. 
    \end{equation}
  \item There exists $\xi\in(0,1)$,
    \begin{equation}
      \label{FVdynloc1}
      \sup_{y\in \Lambda} \E\left\{ \sum_{x\in\Lambda}  \e^{\xi|x-y|}
        \sup_{\substack{\supp f \subset I \\ |f|\le 1 }} \| \chi_x
        f(H_\omega(\Lambda)) \chi_y \|_2  \right\}<\infty. 
    \end{equation}
  \item There exists $\xi\in(0,1)$,
    \begin{equation}
      \label{FVdynloc2}
      \sup_{y\in \Lambda} \E\left\{ \sum_{x\in\Lambda}  \e^{\xi|x-y|}
        \sup_{\substack{\supp f \subset I \\ |f|\le 1 }} \| \chi_x
        f(H_\omega(\Lambda)) \chi_y \|_2  \right\}<\infty.
    \end{equation}
  \item There exists $\xi\in(0,1)$,
    \begin{equation}
      \label{FVdynloc3}
      \sup_{y\in \Lambda} \sup_{\substack{\supp f \subset I \\ |f|\le 1 }}
      \E\left\{ \sum_{x\in\Lambda}  \e^{\xi|x-y|} \| \chi_x
        f(H_\omega(\Lambda)) \chi_y \|_2  \right\}<\infty. 
    \end{equation}
  \item (SUDEC for finite volume with polynomial probability) For all
    $p>d$, there is $q=q_{p,d}$ so that for some $\xi\in(0,1)$, for any
    $L$ large enough, the following holds with probability at least $1
    - L^{-p}$: for any eigenvector $\varphi_{\omega,\Lambda,j}$ of
    $H_{\omega,\Lambda}$, with energy in $I$, for any
    $(x,y)\in\Lambda^2$, one has
    \begin{equation}
      \label{FVsudecP}
      \|  \varphi_{\omega,\Lambda,j} \|_x \|
      \varphi_{\omega,\Lambda,j} \|_y\le L^{q} \e^{-\xi|x-y|}. 
    \end{equation}
  \item (SULE for finite volume with polynomial probability) For all
    $p>d$, there is $q=q_{p,d}$ so that, for some $\xi\in(0,1)$, for
    any $L$ large enough, the following holds with probability at
    least $1 - L^{-p}$: for any eigenvector
    $\varphi_{\omega,\Lambda,j}$ of $H_{\omega,\Lambda}$, with energy
    in $I$, there is a center of localization $x_{\omega,\Lambda,j}\in
    \Lambda$, so that for any $x\in \Lambda$, one has
    \begin{equation}
      \label{FVsule}
      \|  \varphi_{\omega,\Lambda,j} \|_x \le L^q
      \e^{-\xi|x-x_{\omega,\Lambda,j}|}. 
    \end{equation}
  \item (SUDEC for finite volume with sub-exponential probability) For
    all $\nu,\xi\in(0,1)$, $\nu<\xi$, for any $L$ large enough, the
    following holds with probability at least $1 - \e^{-L^\nu}$: for
    any eigenvector $\varphi_{\omega,\Lambda,j}$ of
    $H_{\omega,\Lambda}$, with energy in $I$, for any
    $(x,y)\in\Lambda^2$, one has
    \begin{equation}
      \label{FVsudecP1}
      \|  \varphi_{\omega,\Lambda,j} \|_x \|
      \varphi_{\omega,\Lambda,j} \|_y \le \e^{2L^\nu} \e^{-\xi|x-y|}. 
    \end{equation}
  \item (SULE for finite volume with sub-exponential probability)For
    all $\nu,\xi\in(0,1)$, $\nu<\xi$, for any $L$ large enough, the
    following holds with probability at least $1 - \e^{-L^\nu}$: for
    any eigenvector $\varphi_{\omega,\Lambda,j}$ of
    $H_{\omega,\Lambda}$, with energy in $I$, there is a center of
    localization $x_{\omega,\Lambda,j}\in \Lambda$, so that for any
    $x\in \Lambda$, one has
    \begin{equation}
      \label{FVsule1}
      \|  \varphi_{\omega,\Lambda,j} \|_x \le  \e^{2L^\nu} 
      \e^{-\xi|x-x_{\omega,\Lambda,j}|}. 
    \end{equation}
  \end{enumerate}
  Moreover one can pick $q=p+d$ in (8) and $q=p+\frac 32 d$ in (9).
\end{Th}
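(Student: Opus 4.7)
The plan is to organize the eleven conditions into three groups and prove a cyclic chain of implications. Group A consists of the finite-volume resolvent/MSA condition~(2), which is the natural input from multiscale analysis. Group B gathers the averaged decay statements~(3)--(7), phrased as expectations of weighted sums over the cube. Group C collects the pointwise SUDEC/SULE statements with polynomial~(8)--(9) and sub-exponential~(10)--(11) probability bounds. The overall strategy is to show (1)$\Leftrightarrow$(2), then (2)$\Rightarrow$(3)$\Leftrightarrow\cdots\Leftrightarrow$(7), then (7)$\Rightarrow$(8)--(11), and finally (8) or (10) $\Rightarrow$ (1); the Wegner estimate~\textbf{(W)} assumed in~$I$ is used throughout.

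First, I would treat (1)$\Leftrightarrow$(2): this is the bootstrap characterization of the strong dynamical localization region due to Germinet--Klein. Starting from the initial length-scale estimate~\eqref{startMSA} and the Wegner estimate, the bootstrap MSA produces, for every $\zeta\in(0,1)$, exponential decay of finite-volume resolvents on scales tending to infinity with probability at least $1-e^{-L^\zeta}$, which is precisely the operational definition of $\Sigma_{\mathrm{SDL}}$. Conversely $I\subset\Sigma_{\mathrm{SDL}}$ trivially yields~\eqref{startMSA}.

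Next, I pass from resolvent decay to the averaged statements (3)--(7). The key tool is the Helffer--Sj\"ostrand representation $f(H_\omega(\Lambda))=\frac1\pi\int\bar\partial\tilde f(z)(z-H_\omega(\Lambda))^{-1}\,dL(z)$, which converts pointwise resolvent decay on a dense set of complex energies into operator-kernel decay of $\chi_x f(H_\omega(\Lambda))\chi_y$ for smooth $f$ supported in $I$, uniformly in $f$ with $|f|\le 1$. Combined with the Wegner estimate to control the bad-event set and a standard Combes--Thomas argument to absorb the singular part, one obtains~\eqref{FVdynloc1}--\eqref{FVdynloc3}; the equivalence of (3)=(4) and (5)=(6) with (7) uses the spectral theorem (writing $\varphi_{\omega,\Lambda,j}\varphi_{\omega,\Lambda,j}^*$ as a spectral projector) and the monotone-convergence/Fubini exchange between expectation and the $\sup_f$.

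For (8)--(11), I would decompose the spectral projector $\chi_I(H_\omega(\Lambda))$ as a sum over eigenprojectors; summing the bound~\eqref{FVdynloc1} gives expectation control on $\sum_j e^{\xi|x-y|}\|\varphi_{\omega,\Lambda,j}\|_x\|\varphi_{\omega,\Lambda,j}\|_y$, and Markov's inequality upgrades this to the polynomial-probability SUDEC~\eqref{FVsudecP}. From SUDEC one reads off a center of localization $x_{\omega,\Lambda,j}$ as an almost-maximizer of $\|\varphi_{\omega,\Lambda,j}\|_x$, and the triangle inequality yields SULE~\eqref{FVsule} with the stated loss $q=p+\tfrac32 d$. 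The sub-exponential versions~\eqref{FVsudecP1} and~\eqref{FVsule1} come from the same argument, but starting from the MSA probability bound $1-e^{-L^\nu}$ instead of the polynomial one: one only needs to thicken the pre-factor from $L^q$ to $e^{2L^\nu}$ to swallow the sub-exponential loss. Finally, either~\eqref{FVsudecP} or~\eqref{FVsudecP1} implies~\eqref{startMSA} for an arbitrary $\theta$ via the eigenfunction decomposition of the resolvent, closing the loop.

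The main obstacle is the uniformity in the spectral parameter in the passage (2)$\Rightarrow$(7): multiscale analysis only gives resolvent decay at \emph{each} energy separately, whereas~\eqref{FVdynloc3} requires decay of $\chi_x f(H_\omega(\Lambda))\chi_y$ uniformly over all $f$ supported in $I$ with $|f|\le 1$. Bridging this requires combining Helffer--Sj\"ostrand with a covering argument in the energy variable and a Wegner-type bound to eliminate a small exceptional set of energies; this is where the hypothesis that \textbf{(W)} holds on $I$ is essential, and the careful bookkeeping between the polynomial and sub-exponential probability regimes is what produces the distinction between items (8)--(9) and (10)--(11).
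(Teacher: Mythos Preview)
The paper does not contain a proof of Theorem~\ref{thmFVE}: it is explicitly attributed to~\cite{Ge-Kl:10} in the theorem header and is introduced with the words ``We first recall''. There is therefore nothing in the present paper to compare your proposal against.

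That said, your sketch is a faithful outline of the standard route taken in~\cite{Ge-Kl:10} (built on the Germinet--Klein bootstrap machinery of~\cite{MR1859605,MR2078370,MR2203782}): the cycle (2)$\Rightarrow$averaged dynamical localization$\Rightarrow$pointwise SUDEC/SULE$\Rightarrow$(2), with the bootstrap MSA supplying the equivalence (1)$\Leftrightarrow$(2), is exactly the architecture used there. Two minor remarks. First, as written in this paper items~(3) and~(4) are identical, as are~(5) and~(6); this is a transcription artefact (in~\cite{Ge-Kl:10} the paired items differ by the placement of the supremum or the norm), so your comment on ``(3)=(4) and (5)=(6)'' is not addressing a genuine mathematical step. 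Second, your closing implication ``(8) or (10) $\Rightarrow$ \eqref{startMSA} via the eigenfunction decomposition of the resolvent'' is a bit quick: one must handle the possible smallness of the eigenvalue denominators $E_j-E$, and this is precisely where the Wegner estimate~\textbf{(W)} re-enters to control the probability that some $E_j$ lies too close to $E$. With that caveat, the sketch is sound at the level of detail you give.
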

\subsection{Controlling all the eigenvalues}
\label{sec:contr-all-eigenv}
Assume $E_0$ is such that~\eqref{eq:60} holds for some
$\rho\in(0,1/d)$. Let us first explain why the restriction $\rho<1/d$
is necessary. For an interval $I_\Lambda$ to contain a large number of
eigenvalues of $H_\omega(\Lambda)$ (at least in expectation), one
needs that $N(I_\Lambda)|\Lambda|$ be large. On the other hand, as we
shall see in the proof of the next result, we also need that
$|I_\Lambda|(\log|\Lambda|)^d$ be small. This second restriction is
essentially enforced by the localization of the eigenfunctions in
region of (linear) size $\log|\Lambda|$. These two requirements can
only be met if~\eqref{eq:60} holds.\\
We prove
\begin{Th}
  \label{thr:vsmall1}
  Assume $E_0$ is such that~\eqref{eq:60} holds for some
  $\rho\in(0,1/d)$. Fix $0<d\rho<\rho'<\rho''<1$ and
  $0<\alpha<\min(d/\rho'-d/\rho'',1/\rho-d/\rho')$. Pick $I_\Lambda$
  centered at $E_0$ such that
  $N(I_\Lambda)|\Lambda|=\log^\alpha|\Lambda|$. For $L$ sufficiently
  large and $\ell=\log^{1/\rho'}L$, we have a decomposition of
  $\Lambda$ into disjoint cubes of the form
  $\Lambda_\ell(\gamma_j):=\gamma_j+[0,\ell]^d$:
  \begin{itemize}
  \item $\cup_j\Lambda_\ell(\gamma_j)\subset\Lambda_L$,
  \item $\dist (\Lambda_\ell(\gamma_j),\Lambda_\ell(\gamma_k))\ge
    \log^{1/\rho''}|\Lambda|$ if $j\not=k$,
  \item $\dist (\Lambda_\ell(\gamma_j),\partial\Lambda)\ge
    \log^{1/\rho''}|\Lambda|$
  \item $|\Lambda_L\setminus\cup_j\Lambda_\ell(\gamma_j)|\lesssim |
    \Lambda|\log^{d/\rho''-d/\rho'}|\Lambda|$,
  \end{itemize}
  such that, there exists a set of configurations $\mathcal{Z}_
  \Lambda $ s.t.:
  \begin{itemize}
  \item $\P(\mathcal{Z}_\Lambda)\geq 1 - (\log
    L)^{-(\min(d/\rho'-d/\rho'',1/\rho-d/\rho')-\alpha)/2}$,
  \item for $\omega\in\mathcal{Z}_\Lambda $, each centers of
    localization associated to $H_\omega(\Lambda)$ belong to some
    $\Lambda_\ell(\gamma_j)$ and each box $\Lambda_\ell(\gamma_j)$
    satisfies:
    \begin{enumerate}
    \item the Hamiltonian $H_\omega(\Lambda_\ell(\gamma_j))$ has at
      most one eigenvalue in $I_\Lambda $, say,
      $E(\omega,\Lambda_\ell(\gamma_j))$;
    \item $\Lambda_\ell(\gamma_j)$ contains at most one center of
      localization, say $x_{k_j}(\omega,L)$, of an eigenvalue of
      $H_\omega(\Lambda)$ in $I_\Lambda $, say $E_{k_j}(\omega,L)$;
    \item $\Lambda_\ell(\gamma_j)$ contains a center
      $x_{k_j}(\omega,L)$ if and only if
      $\sigma(H_\omega(\Lambda_\ell(\gamma_j)))\cap
      I_\Lambda\not=\emptyset$; in which case, one has, with $\ell'=
      \log^{-1/\rho''} L$,
      \begin{equation}
        \label{error}
        |E_{k_j}(\omega,L)-E(\omega,\Lambda_\ell(\gamma_j))| \leq  
        e^{-\ell'}\text{ and }\mathrm{dist}(x_{k_j}(\omega,L),
        \Lambda_L \setminus \Lambda_\ell(\gamma_j))\geq \ell'.
      \end{equation}
    \end{enumerate}
  \end{itemize}
  In particular, if $\omega\in\mathcal{Z}_\Lambda$, all the
  eigenvalues of $H_\omega(\Lambda)$ are described by~\eqref{error}.
\end{Th}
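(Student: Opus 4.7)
The plan is to follow the reduction strategy of \cite{Ge-Kl:10} --- tile $\Lambda_L$ with isolated sub-cubes separated by narrow corridors, then show that eigenvalues of $H_\omega(\Lambda)$ match, one-to-one and up to exponentially small error, with eigenvalues of the restrictions to the sub-cubes --- but now invoking the improved Minami estimate from Theorem~\ref{thmWM} so as to reach the extremely small IDS regime $N(I_\Lambda)|\Lambda| = \log^\alpha|\Lambda|$. First I would construct the geometric decomposition: tile $\Lambda_L$ with slightly larger cubes of side $\ell + \log^{1/\rho''} L$, inside each of which place a centered sub-cube $\Lambda_\ell(\gamma_j)$ of side $\ell = \log^{1/\rho'} L$, leaving a buffer corridor of width $\log^{1/\rho''} L$. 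Since $\rho' < \rho''$ the corridor is much thinner than $\ell$, and a direct volume count yields the stated bound on $|\Lambda_L \setminus \cup_j \Lambda_\ell(\gamma_j)|$.

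Next I would define the bad event $\mathcal{Z}_\Lambda^c$ as the union of: (a) some $H_\omega(\Lambda_\ell(\gamma_j))$ has two or more eigenvalues in $I_\Lambda$; (b) some center of localization of an eigenvalue of $H_\omega(\Lambda)$ in $I_\Lambda$ lies in a corridor or within $\ell'$ of $\partial\Lambda_\ell(\gamma_j)$; (c) the sub-exponential SUDEC/SULE estimate of Theorem~\ref{thmFVE} fails. The hypothesis~\eqref{eq:60} together with $N(I_\Lambda)|\Lambda| = \log^\alpha|\Lambda|$ forces $|I_\Lambda|\lesssim (\log L)^{-1/\rho}$. The improved Minami bound~\eqref{M} applied to each $\Lambda_\ell(\gamma_j)$ contributes at most $N(I_\Lambda)|I_\Lambda|\ell^{2d}$ per cube, and summing over the $|\Lambda|/\ell^d$ sub-cubes gives a total expectation
\begin{equation*}
  N(I_\Lambda)|I_\Lambda||\Lambda|\ell^d \lesssim (\log L)^{\alpha - 1/\rho + d/\rho'},
\end{equation*}
while covariance together with SULE shows that the expected number of eigenvalues of $H_\omega(\Lambda)$ in $I_\Lambda$ with localization center in the corridor region is $\lesssim N(I_\Lambda)|\Lambda|\log^{d/\rho''-d/\rho'}|\Lambda|=(\log L)^{\alpha + d/\rho'' - d/\rho'}$. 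Both quantities go to zero precisely under $\alpha<\min(d/\rho'-d/\rho'',\,1/\rho-d/\rho')$, and a Markov inequality at the right threshold delivers the stated lower bound on $\P(\mathcal{Z}_\Lambda)$.

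On $\mathcal{Z}_\Lambda$ the eigenvalue comparison is a standard quasi-mode argument: any eigenfunction $\varphi_{\omega,\Lambda,j}$ with energy in $I_\Lambda$ and center $x_{k_j}$ deep inside some $\Lambda_\ell(\gamma_j)$ is sub-exponentially small beyond an $\ell'$-neighborhood of $x_{k_j}$ by SULE; multiplying by a smooth plateau supported in $\Lambda_\ell(\gamma_j)$ produces a quasi-mode for $H_\omega(\Lambda_\ell(\gamma_j))$ with residual of order $e^{-\ell'}$, and the spectral theorem combined with the improved Minami on the sub-cube (no two eigenvalues of $H_\omega(\Lambda_\ell(\gamma_j))$ in $I_\Lambda$) forces a unique matching eigenvalue $E(\omega,\Lambda_\ell(\gamma_j))$ within $e^{-\ell'}$; a symmetric argument starting from an eigenfunction of the sub-cube provides the reverse correspondence, yielding the bijection.

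The main obstacle I expect is the delicate bookkeeping that balances the three scales --- $\ell$, the corridor width, and $|I_\Lambda|$ --- so that neither the Minami error, nor the corridor-escape error, nor the approximation error dominates. The restriction $\rho<1/d$ is precisely what is needed for an admissible $\rho'\in(d\rho,1)$ to exist so that $1/\rho - d/\rho'>0$, guaranteeing a non-empty window for $\alpha$; any weaker hypothesis on the IDS would let the Minami term swamp the count $\log^\alpha|\Lambda|$ of eigenvalues one wants to resolve.
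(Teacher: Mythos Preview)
Your proposal is correct and follows essentially the same route as the paper's own proof: both derive $|I_\Lambda|\lesssim(\log L)^{-1/\rho}$ from~\eqref{eq:60}, then bound the ``two eigenvalues in one sub-cube'' event via the enhanced Minami estimate~\eqref{M} to get the exponent $\alpha-1/\rho+d/\rho'$, bound the ``center in the corridor'' event via the enhanced Wegner-type control $|\Upsilon|N(I_\Lambda)$ to get the exponent $\alpha-d/\rho'+d/\rho''$, and defer the bijective matching to the quasi-mode argument of~\cite[Theorem~1.15]{Ge-Kl:10}. The only cosmetic difference is that the paper phrases the first bad event in terms of boxes containing two localization centers of $H_\omega(\Lambda)$ rather than two eigenvalues of $H_\omega(\Lambda_\ell(\gamma_j))$, but these are interchangeable via the SULE/quasi-mode transfer you describe.
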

\begin{Rem}
  \label{remsize}
(i) In Theorem~\ref{thr:vsmall1}, the condition
$N(I_\Lambda)|\Lambda|=\log^\alpha|\Lambda|$ does not, in general,
provide the largest possible interval $I_\Lambda$ where our analysis
works. It is chosen so as to work in all regimes
provided~\eqref{eq:60} holds; it is optimal only in regimes where the
integrated density of states $N(I_\Lambda)$ is exponentially small in
$|I_\Lambda|^{-1}$. In other regimes, one may actually take
$N(I_\Lambda)|\Lambda|$ larger.  Note that, as in
Theorem~\ref{thr:vsmall1}, one has $|I_\Lambda|\ell^d\ll1$,
Lemma~\ref{lemasympt} gives a precise description of:
\begin{itemize}
\item the probability distribution of the $\gamma$'s for which
  $H_\omega(\Lambda_\ell(\gamma))$ has exactly one eigenvalue in
  $I_\Lambda$,
\item the distribution of this eigenvalue when this is the case
\end{itemize}

 (ii) If the integrated density of states $N(I_\Lambda)$ is exponentially
  small in $|I_\Lambda|^{-1}$ then as pointed out in \eqref{sizeI}
  below, the typical size of intervals where we can control all the
  eigenvalues, and thus prove Poisson convergence, is of order an
  inverse power of $\log |\Lambda|$. This should be compared to
  \cite{Ge-Kl:10} where the admissible size for $|I_\Lambda|$ was of
  order $|\Lambda|^{-\alpha}$, with $\alpha > (1+(d+1)^{-1})^{-1}$
  (see \cite[(1.43)]{Ge-Kl:10} with $\rho=1$ and $\rho'=0$). If now we have
  $N(I_\Lambda)\asymp |I_\Lambda|^{1+\rho'}$, $\rho'\ge 0$, then the
  admissible size for $|I_\Lambda|$ is of order $|\Lambda|^{-\alpha}$,
  with $\alpha > (1+\rho'+(d+1)^{-1})^{-1}$, with no restriction on
  $\rho'$ (compare to \cite[(1.12)]{Ge-Kl:10}). In particular $\alpha$
  can be close to zero if $\rho'$ is large. These improvements are
  direct consequences of the improved Wegner and Minami estimates of
  Theorem~\ref{thmWM} above.
\end{Rem}
\begin{proof}[Proof of Theorem~\ref{thr:vsmall1}]
  As $N(I_\Lambda)|\Lambda|=\log^\alpha|\Lambda|$,
  assumption~\eqref{eq:60} yields, for $|\Lambda|$ sufficiently large,
  \begin{equation}
    |I_\Lambda|\leq 2\log^{-1/\rho}|\Lambda|.  \label{sizeI}
  \end{equation}
  We follow the proof of~\cite[Theorem 1.15]{Ge-Kl:10}. First we note
  that, by the localization property (Loc), we need $\ell$ to be
  larger than $\log^{1/\rho'}L$ for some large $C$ to
  get~\eqref{error}. With the choices made in
  Theorem~\ref{thr:vsmall1}, the estimate (3.1) in~\cite{Ge-Kl:10}
  becomes
  \begin{equation}
    \label{eq:1}
    \begin{split}
      \P(\# \mathcal{S}_{\ell,L} \ge 1) &\lesssim
      \frac{L^d}{\log^{d/\rho'} L} N(I_\Lambda)|I_\Lambda|(\log
      L)^{2d/\rho'} \lesssim L^d\,\left(\log^{d/\rho'}L\right)\,
      N(I_\Lambda)|I_\Lambda|\\&\lesssim
      \log^{\alpha-(1/\rho-d/\rho')}L
    \end{split}
  \end{equation}
  by our choice of $I_\Lambda$.\\
  In the same way, the estimate (3.3) in~\cite{Ge-Kl:10} becomes
  \begin{equation}
    \label{eq:2}
    \begin{split}
      & \hskip-1cm \P(H_\omega(\Lambda) \mbox{ has a localization
        center in } \Upsilon) \\ &\lesssim |\Upsilon|
      N(I_\Lambda)\\&\lesssim |I_\Lambda|^{-1} N(I_\Lambda)|I_\Lambda|
      |\Lambda|\left(\log^{d/\rho''-d/\rho'}|\Lambda|\right)
      \\&\lesssim \log^{\alpha-d(1/\rho'-1/\rho'')}|\Lambda|
    \end{split}
  \end{equation}
  by our choice of $I_\Lambda$.\\
  This completes the proof of Theorem~\ref{thr:vsmall1}.
\end{proof}
\subsection{Controlling most eigenvalues}
\label{sec:contr-most-eigenv}
We will give two versions of this reduction. In
Theorem~\ref{thr:vbig1}, the first version, we consider region where
the density of states is not too small: it can be polynomially small
to any order but not smaller. In this region, we give a version of the
reduction that minimizes the estimate on the probability of the bad
set (where our description does not work) as well as the number of
eigenvalues that is not described by our scheme.\\
In Theorem~\ref{thr:vbig0}, the second version of the reduction
theorem, we want to allow exponentially small density of states as in
Theorem~\ref{thr:vsmall1}. The control will still be obtained with a
good probability but we don't control as many eigenvalues. This
version is used in the proofs of Theorems~\ref{thr:2} and~\ref{thr:1}.\\
The reduction goes back to the results obtained in
\cite[Theorem~1.14]{Ge-Kl:10} and improves upon it. We follow the
proof of that result and only indicate the differences.
\begin{Th}
  \label{thr:vbig1}
  Set $\ell'= R\log |\Lambda|$ with $R$ large and consider intervals
  $I_\Lambda\subset \Sigma_{SDL}$. Assume that for some $1\le \beta \le
  \beta'<\alpha'\le\alpha<\infty$, for $|\Lambda|$ large, we have
  \begin{equation*}
    |I_\Lambda|^{-\alpha'}\lesssim |\Lambda|\lesssim
    |I_\Lambda|^{-\alpha}\quad\text{ and }\quad |I_\Lambda|^{\beta'}
    \lesssim N(I_\Lambda)\lesssim |I_\Lambda|^{\beta} 
  \end{equation*}
  Set
  \begin{align}\label{set}
    \delta_0 = (\alpha-\beta)^{-1}>0, \quad \zeta= \frac{
      \alpha-\beta}{ \alpha'-\beta'}\ge 1, \quad \nu_0=
    \min\left(\zeta^{-1}, \frac {\delta_0}{d+1}\right)\leq1.
  \end{align}
  Note that $\nu_0\zeta\le 1$.\\
  For any $\nu<\nu_0$ and $\kappa\in(0,1)$, there exists
  \begin{itemize}
  \item a decomposition of $\Lambda_L$ into
    $\mathcal{O}(|\Lambda|/\ell_\Lambda^d)$ disjoint cubes of the form
    $\Lambda_\ell(\gamma_j):=\gamma_j+[0,\ell]^d$, where $\ell\sim
    (|I_\Lambda| \ell')^{-\frac 1{d+1}}$, so that:
    \begin{itemize}
    \item $\cup_j\Lambda_\ell(\gamma_j)\subset\Lambda_L$,
    \item $\dist (\Lambda_\ell(\gamma_j),\Lambda_\ell(\gamma_k))\ge
      \ell'$ if $j\not=k$,
    \item $\dist (\Lambda_\ell(\gamma_j),\partial\Lambda)\ge \ell'$
    \item $|\Lambda_L\setminus\cup_j\Lambda_\ell(\gamma_j)|\lesssim |
      \Lambda| \ell'/\ell$,
    \end{itemize}
  \item a set of configurations $\mathcal{Z}_\Lambda$ satisfying
    $\displaystyle \P(\mathcal{Z}_\Lambda)\geq 1 - \exp\left(-
      (N(I_\Lambda) |\Lambda|)^{\delta_{\kappa,\nu}}/C\right)$, with
    $\delta_{\kappa,\nu}=\min(1- \nu\zeta,\kappa\nu)\in]0,\frac12[$,
  \end{itemize}
  such that, for $|\Lambda|$ sufficiently large (depending only on
  $\beta$, $\beta'$, $\alpha$, $\alpha'$, $\nu$ and $\kappa$),
  \begin{itemize}
  \item for all $\omega\in\mathcal{Z}_\Lambda$, there exists at least
    $\displaystyle \frac{|\Lambda|}{\ell^d}(1+o\left(1\right))$
    disjoint boxes $\Lambda_\ell(\gamma_j)$ satisfying the properties
    (1), (2) and (3) described in Theorem~\ref{thr:vsmall1} with
    $\ell'= R\log |\Lambda|$,
  \item the number of eigenvalues of $H_{\omega,L}$ that are not
    described by the above picture is bounded by $C
    (N(I_\Lambda)|\Lambda|)^{\gamma_{\kappa,\nu}}$, with
    $\gamma_{\kappa,\nu}=1-(1-\kappa)\nu\in]0,1[$.
  \end{itemize}
  Particular cases:
  \begin{itemize}
  \item If $|I_\Lambda|\asymp |\Lambda|^{-\alpha^{-1}}$ and
    $N(I_\Lambda)\asymp |I_\Lambda|^{\beta}$, then $\zeta=1$.
  \item If $n(E)>0$ and $I_\Lambda$'s are centered at $E$, then
    $\beta=\beta'=1$.
  \end{itemize}
\end{Th}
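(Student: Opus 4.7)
The plan is to follow the strategy of~\cite[Theorem~1.14]{Ge-Kl:10} while substituting the enhanced Wegner and Minami estimates of Theorem~\ref{thmWM} for their classical counterparts. The geometric set-up is the standard decomposition of $\Lambda_L$ into disjoint cubes $\Lambda_\ell(\gamma_j)$ of side length $\ell$, separated by corridors of width $\ell'=R\log|\Lambda|$. The scale $\ell\sim(|I_\Lambda|\ell')^{-1/(d+1)}$ is dictated by balancing the contribution of boxes carrying multiple eigenvalues in $I_\Lambda$ (controlled by Minami, favoring small $\ell$) against the contribution of eigenvalues of $H_\omega(\Lambda)$ whose localization centers fall in the corridors (controlled by Wegner, favoring large $\ell$).

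First I would set up the expected counts of the two types of defect. By the enhanced Minami estimate~\eqref{M}, for each box the probability of carrying two or more eigenvalues in $I_\Lambda$ is at most $CN(I_\Lambda)|I_\Lambda|\ell^{2d}$; summing over the $\sim|\Lambda|/\ell^d$ boxes yields an expected count of the order of $N(I_\Lambda)|\Lambda|\cdot|I_\Lambda|\ell^d$. By the enhanced Wegner estimate~\eqref{W}, the expected number of eigenvalues of $H_\omega(\Lambda)$ in $I_\Lambda$ whose localization center lies in the corridor $\Upsilon=\Lambda_L\setminus\bigcup_j\Lambda_\ell(\gamma_j)$ is bounded by $CN(I_\Lambda)|\Upsilon|\lesssim N(I_\Lambda)|\Lambda|\,\ell'/\ell$. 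With the above choice of $\ell$, both quantities are of the same order and small compared to the typical count $N(I_\Lambda)|\Lambda|$; this is what makes the number of uncontrolled eigenvalues a power of $N(I_\Lambda)|\Lambda|$ strictly smaller than~$1$, with exponent $\gamma_{\kappa,\nu}=1-(1-\kappa)\nu$.

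Next I would use the strong dynamical localization properties of Theorem~\ref{thmFVE} (the SULE/SUDEC items~(8)--(11)) to match each eigenvalue of $H_\omega(\Lambda)$ in $I_\Lambda$ whose center lies at distance $\ge\ell'$ from $\partial\Lambda_\ell(\gamma_j)$ with a unique eigenvalue of $H_\omega(\Lambda_\ell(\gamma_j))$, with an error super-polynomially small in $|\Lambda|$. This, combined with the discarding of the Minami-bad boxes and of the eigenvalues with centers in $\Upsilon$, yields properties~(1), (2), (3) of Theorem~\ref{thr:vsmall1} and the estimate~\eqref{error} on the surviving boxes. The two \emph{particular cases} are then immediate bookkeeping with the definitions of $\zeta$, $\delta_0$ and~$\nu_0$ in~\eqref{set}.

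The main obstacle, and the third step, is upgrading the expectation bounds above to the announced probability estimate $\P(\mathcal{Z}_\Lambda)\ge 1-\exp(-(N(I_\Lambda)|\Lambda|)^{\delta_{\kappa,\nu}}/C)$. For this I would reproduce the Chebyshev/moment large deviation scheme from~\cite[Section~3]{Ge-Kl:10}: the number of bad boxes and the number of corridor-centered eigenvalues are sums of weakly dependent indicator variables, the weak dependence stemming precisely from the exponential decay of finite-volume eigenfunctions and resolvents (which is why the corridor width must be of order $R\log|\Lambda|$ with $R$ large). Choosing the moment $k\sim(N(I_\Lambda)|\Lambda|)^{\delta_{\kappa,\nu}}$ and using the high-order Minami bound~\eqref{HOM} to control the resulting $k$-th moment produces the announced subexponential probability. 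The delicate point is precisely that the $k$-th moment estimate contains a factor $N(I_\Lambda)|\Lambda|$ (rather than $|I_\Lambda|\,|\Lambda|$) per Minami occurrence; without the enhancement of Theorem~\ref{thmWM}, a spurious factor $|I_\Lambda|/N(I_\Lambda)$ would accumulate and destroy the argument in the regimes $N(I_\Lambda)\ll|I_\Lambda|^{1+\rho'}$ with $\rho'>0$. Everything else is essentially transcribed from the proof of~\cite[Theorem~1.14]{Ge-Kl:10}, with $N(I_\Lambda)$ systematically replacing $|I_\Lambda|$.
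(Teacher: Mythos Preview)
Your outline matches the paper's proof in its geometric decomposition, choice of $\ell\sim(|I_\Lambda|\ell')^{-1/(d+1)}$, expected defect counts, and the use of localization to match eigenvalues. The departure is in how the probability bound and the missed-eigenvalue exponent $\gamma_{\kappa,\nu}$ are actually obtained. The paper does \emph{not} run a single moment/Chebyshev scheme on a sum of weakly dependent indicators. Instead it separates two estimates: (i)~since the boxes $\Lambda_\ell(\gamma_j)$ are $\ell'$-separated, the local Hamiltonians are genuinely independent, and a Bernoulli large-deviation bound (using only the $n=2$ Minami estimate~\eqref{M} for the single-box probability) gives $\P(\#\mathcal{S}_{\ell,L}\ge K)\lesssim 2^{-K}$ with $K\asymp N(I_\Lambda)|\Lambda|\,|I_\Lambda|\ell^d$, and similarly $\P(\#\mathcal{S}'_{\ell,L}\ge K')\lesssim 2^{-K'}$ for the corridor boxes via~\eqref{W}; (ii)~the high-order estimate~\eqref{HOM} is used separately, via a union bound over boxes, to cap the number of eigenvalues in \emph{any single box} by $r\asymp(N(I_\Lambda)|\Lambda|)^{\kappa\nu}$. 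The missed-eigenvalue count is then the product $(K+K')\cdot r$, and after enlarging $K\sim K'$ to $(N(I_\Lambda)|\Lambda|)|I_\Lambda|^{\nu/\delta_0}\lesssim(N(I_\Lambda)|\Lambda|)^{1-\nu}$ one reads off $\gamma_{\kappa,\nu}=1-(1-\kappa)\nu$ directly; the probability exponent $\delta_{\kappa,\nu}=\min(1-\nu\zeta,\kappa\nu)$ comes from combining $2^{-K}$ (via the \emph{lower} bound $K\gtrsim(N(I_\Lambda)|\Lambda|)^{1-\nu\zeta}$) with the per-box HOM tail.

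One correction: in~\eqref{HOM} the enhancement replaces \emph{only the last} factor $|I_n|$ by $N(I_n)$, not one per occurrence, so there is no accumulation of $|I_\Lambda|/N(I_\Lambda)$ factors to fear. That single gain is exactly what allows one to set $K\asymp N(I_\Lambda)|\Lambda|\,|I_\Lambda|\ell^d$ (rather than $|I_\Lambda|^2|\Lambda|\ell^d$) and $K'\asymp N(I_\Lambda)|\Lambda|\,\ell'/\ell$, and this is what drives the improved exponents.
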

\begin{proof}[Proof of Theorem~\ref{thr:vbig1}]
  We proceed as in \cite[Proof of Theorem~1.14]{Ge-Kl:10} but take
  advantage of Theorem~\ref{thmWM} above.\\
  By assumption, we have
  \begin{equation}
    |I_\Lambda|^{-(\alpha'-\beta')} \lesssim N(I_\Lambda)|\Lambda|
    \lesssim |I_\Lambda|^{-(\alpha-\beta)} . 
  \end{equation}
  or equivalently, with notations defined in~\eqref{set},
  \begin{equation}
    \label{compare}
    (N(I_\Lambda)|\Lambda|)^{-\zeta\delta_0} \lesssim  |I_\Lambda|
    \lesssim (N(I_\Lambda)|\Lambda|)^{-\delta_0} . 
  \end{equation}
  First, use in the reduction procedure~\eqref{W} and~\eqref{M} as the
  Wegner and Minami estimates. As in \cite{Ge-Kl:10}, we consider a
  collection of $\mathcal{O}(|\Lambda|\ell^{-d})$ boxes
  $\Lambda_\ell(\gamma_j)$ two by two distant by at least $\ell'$, and
  such that $|\Lambda\setminus \cup_j \Lambda_\ell(\gamma_j)|\lesssim
  |\Lambda|\ell'/\ell$. Let $\mathcal{S}_{\ell,L}$ be the set of boxes
  $\Lambda_{\ell}(\gamma_j)\subset\Lambda$ containing at least two
  centers of localization of $H_{\omega,L}$ (see the proof
  of~\cite[Theorem 1.16]{Ge-Kl:10} for more details). Set
  \begin{equation}
    \label{defK}
    K:=4\e  N(I_\Lambda)   |\Lambda| |I_\Lambda| \ell^d
  \end{equation}
  Then, by~\eqref{M},
  \begin{equation}
    \label{probaK}
    \P(\# \mathcal{S}_{\ell,L}  \ge K)   \lesssim 2^{-K}. 
  \end{equation}
  Next, cover $ \Lambda\setminus \cup_j \Lambda_\ell(\gamma_j)$ with a
  partition of boxes $\Lambda_{\ell'}(\gamma'_j)$ and let
  $\mathcal{S}'_{\ell,L}$ be the set of boxes
  $\Lambda_{\ell'}(\gamma_j)\subset\Lambda$ containing at least one
  center of localization of $H_{\omega,L}$ (see the proof
  of~\cite[Theorem 1.16]{Ge-Kl:10} for more details). Set
  \begin{equation}
    \label{defK'}
    K':=2^{d+1}N(I_\Lambda) |\Lambda|  \frac { \ell'}{\ell} .
  \end{equation}
  It follows from~\eqref{W} that
  \begin{equation}
    \label{probaK'}
    \P(\# \mathcal{S}'_{\ell,L}  \ge K')   \lesssim 2^{-K'}. 
  \end{equation}
  To evaluate the number of eigenvalues of $H_\omega(\Lambda)$ we may
  miss because of this reduction, we need to control the number of
  centers $x_k(\omega,\Lambda)$ that may fall into $K$ boxes of
  $\mathcal{S}_{\ell,L}$ and $K'$ boxes of $\mathcal{S}'_{\ell,L}$. In
  \cite{Ge-Kl:10} we used the crude deterministic bound given by the
  volume of the considered boxes. Here we estimate this number using
  the high order Minami estimate~\eqref{HOM}. Given an integer $r\ge
  1$, it follows from~\eqref{HOM} that
  \begin{equation}
    \label{estr2}
    \begin{split}
      \P\left\{
        \begin{aligned}
          \exists \mbox { a box } \Lambda_\ell(\gamma_j) \mbox {
            s.t.}\\ \tr
          \car_{I_\Lambda}(H_\omega(\Lambda_\ell(\gamma_j))) \ge r
        \end{aligned}
      \right\} & \lesssim \frac{|\Lambda|}{\ell^d} \P\{ \tr
      \car_{I_\Lambda}(H_\omega(\Lambda_\ell)) \ge r\}\\
      & \lesssim \frac1{r!}  N(I_\Lambda)|\Lambda| (C |I_\Lambda|
      \ell^d)^{r-1} \\
      & \lesssim N(I_\Lambda)|\Lambda| \left(\frac{C|I_\Lambda|
          \ell^d}r\right)^{r-1}.
    \end{split}
  \end{equation}
  In the same way, we have, for $r'\ge 1$ an integer,
  \begin{equation}
    \label{estr3}
    \begin{split}
      \P\left\{
        \begin{aligned}
          \exists \mbox { a box } \Lambda_{\ell'}(\gamma_j')\mbox {
            s.t.}\\ \tr
          \car_{I_\Lambda}(H_\omega(\Lambda_{\ell'}(\gamma_j'))) \ge
          r'
        \end{aligned}
      \right\} & \lesssim \frac{|\Lambda|}{\ell^d}
      \frac{\ell'}{\ell}\P\{
      \tr \car_{I_\Lambda}(H_\omega(\Lambda_{\ell'})) \geq r'\}  \\
      & \leq N(I_\Lambda)|\Lambda|
      \left(\frac{\ell'}{\ell}\right)^{d+1} \left(\frac{C|I_\Lambda|
          (\ell')^d}{r'}\right)^{r'/C-1}.
    \end{split}
  \end{equation}
  The additional constant $C$ in the exponent in the right hand side
  of~\eqref{estr3} comes form the fact that the Hamiltonians
  $(H_\omega(\Lambda_{\ell'})_{\Lambda_{\ell'}\in
    \mathcal{S}'_{\ell,L}}$ need not be independent,
  but there are finitely many subfamilies of independent Hamiltonians.\\
  We pick $r\asymp r'\asymp (N(I_\Lambda)|\Lambda|)^\delta$, with
  $\delta>0$ to be chosen later. We thus end up with
  \begin{align}
    ~\eqref{estr2}, \,~\eqref{estr3} \lesssim
    \exp(-(N(I_\Lambda)|\Lambda|)^\delta) .
    \label{estr4}
  \end{align}
  Hence, with a probability at least
  $1-\exp(-(N(I_\Lambda)|\Lambda|)^\delta) $, the number of
  eigenvalues of $H_\omega(\Lambda)$ we miss with our reduction is
  bounded by
  \begin{equation}
    \label{bad}
    C (K+K')  (N(I_\Lambda)|\Lambda|)^\delta.
  \end{equation}
  We now optimize in $\ell$ by requiring $K\sim K'$, that is
  \begin{equation}\label{condeq}
    |I_\Lambda| \ell^d  = \frac{\ell'}{\ell}.
  \end{equation}
  In other terms we choose $\ell= (|I_\Lambda| \ell')^{-\frac
    1{d+1}}$. So that any $K,K'$ satisfying
  \begin{equation}
    K\sim K' \gtrsim K_0:=  (N(I_\Lambda) |\Lambda|)
    |I_\Lambda|^{\frac 1{d+1}} ( \ell')^{1+\frac 1{d+1}}, 
  \end{equation}
  is good enough. To ensure that $K$ and $K'$ grow fast enough (in
  order to get a fast decaying probability in~\eqref{probaK}
  and~\eqref{probaK'}), we actually enlarge them and set for any
  $\nu<\nu_0$ (note that $\log(N(I_\Lambda)|\Lambda|)
  \asymp\log|\Lambda|$),
  \begin{equation}
    \label{defKK'}
    K\sim K' \sim (N(I_\Lambda) |\Lambda|)  |I_\Lambda|^{\nu \delta_0^{-1}} .
  \end{equation}
  Taking~\eqref{compare} into account, we have the following lower and
  upper bound,
  \begin{equation}
    \label{KK'bound}
    (N(I_\Lambda)|\Lambda|)^{1-\nu\zeta} \lesssim  K\sim K' \lesssim
    (N(I_\Lambda)|\Lambda|)^{1-\nu}.
  \end{equation}
  Next, let $\kappa\in]0,1[$ be given, and fix $\delta$ above so that
  $\delta=\kappa \nu$. It follows from~\eqref{probaK},
  \eqref{probaK'},~\eqref{estr4},~\eqref{bad} and~\eqref{KK'bound}
  that the number of missing eigenvalues is bounded by $1 -
  \exp(-(N(I_\Lambda)|\Lambda|)^{1-\zeta\nu})
  -\exp(-(N(I_\Lambda)|\Lambda|)^{\kappa \nu})$. At last, to see that
  $\delta_{\kappa,\nu}<\frac 12$, note that
  $\min(1-\zeta\nu,\kappa\nu) \le (1+\zeta\kappa^{-1})^{-1}<\frac 12$,
  since $\kappa<1$ and $\zeta\ge 1$. The theorem follows.
\end{proof}

\noindent We now turn to the second version of our reduction
theorem. One has
\begin{Th}
  \label{thr:vbig0}
  Pick $\rho\in(0,1/d)$. For a cube $\Lambda=\Lambda_L$, consider an
  interval $I_\Lambda=[a_\Lambda,b_\Lambda]\subset I$, $I$ a fixed
  compact in $\Sigma_{SDL}$. Pick  $\rho<\rho'<\rho''<1/d$. \\
  There exists $\alpha_0>0$ such that, for
  $\alpha_0\leq\alpha<\alpha'$, if $(I_\Lambda)$ satisfies
  \begin{equation}
    \label{eq:9}
    \log^\alpha|\Lambda|\leq N(I_\Lambda)|\Lambda|\leq
    \log^{\alpha'}|\Lambda|\quad\text{and}\quad
    N(I_\Lambda)\,e^{|I_\Lambda|^{-\rho}}\geq1.
  \end{equation}
  then, picking $(\tilde\ell_\Lambda,\ell'_\Lambda)$ such that
  \begin{equation}
    \label{eq:24}
    \tilde\ell^d_\Lambda|I_\Lambda|\asymp\log^{1/\rho'-1/\rho}|\Lambda|
    \quad\text{ and
    }\quad
    (\ell'_\Lambda)^d\asymp\tilde\ell^d_\Lambda\log^{1/\rho''-1/\rho'}
    |\Lambda|,
  \end{equation}
  there exists
  \begin{itemize}
  \item a decomposition of $\Lambda_L$ into disjoint cubes of the form
    $\Lambda_{\ell_\Lambda}(\gamma_j):=\gamma_j+[0,\ell_\Lambda]^d$
    where $\displaystyle\ell_\Lambda=\tilde\ell_\Lambda(1+o(1))$ such
    that
    \begin{itemize}
    \item $\cup_j\Lambda_{\ell_\Lambda}(\gamma_j)\subset\Lambda_L$,
    \item $\dist
      (\Lambda_{\ell_\Lambda}(\gamma_j),\Lambda_{\ell_\Lambda}(\gamma_k))\ge
      \ell'_\Lambda$ if $j\not=k$,
    \item $\dist (\Lambda_{\ell_\Lambda}(\gamma_j),\partial\Lambda)\ge
      \ell'_\Lambda$
    \item
      $|\Lambda_L\setminus\cup_j\Lambda_{\ell_\Lambda}(\gamma_j)|\lesssim
      | \Lambda| \ell'_\Lambda/\ell_\Lambda$,
    \end{itemize}
  \item a set of configurations $\mathcal{Z}_\Lambda$ satisfying
    $\displaystyle\pro(\mathcal{Z}_\Lambda)\geq 1 -
    e^{-\log^{\alpha-2}|\Lambda|}$,
  \end{itemize}
  so that, for $L$ sufficiently large (depending only on
  $(\alpha,\alpha',\rho,\rho',\rho'')$), one has
  \begin{itemize}
  \item for $\omega\in\mathcal{Z}_\Lambda$, there exists at least
    $\displaystyle \frac{|\Lambda|}{\ell_\Lambda^d}\left(1+
      O\left(\log^{1/\rho'-1/\rho}|\Lambda|\right)\right)$ disjoint
    boxes $\Lambda_{\ell_\Lambda}(\gamma_j)$ satisfying the properties
    (1), (2) and (3) described in Theorem~\ref{thr:vsmall1} where
    $\ell'_\Lambda$ in~\eqref{error} satisfies~\eqref{eq:24};
  \item the number of eigenvalues of $H_\omega(\Lambda)$ that are not
    described above is bounded by
    \begin{equation*}
      C N(I_\Lambda)|\Lambda|\,\left[\log^{1/\rho'-1/\rho}
        |\Lambda|+\log^{1/(d\rho'')-1/(d\rho')}|\Lambda|\right].
    \end{equation*}
  \end{itemize}
\end{Th}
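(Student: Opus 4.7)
The plan is to run the reduction scheme of Theorem~\ref{thr:vbig1} at polylogarithmic box scales tailored to the regime where $N(I_\Lambda)$ may be exponentially small in $|I_\Lambda|^{-\rho}$. The enhanced Wegner, Minami and high-order Minami estimates of Theorem~\ref{thmWM} are crucial here: without the IDS factor on the right-hand sides of~\eqref{W},~\eqref{M},~\eqref{HOM}, the standard $|I_\Lambda|$-factors would overwhelm $N(I_\Lambda)$ in this Lifshitz-type regime and the scheme would not close. Unlike in Theorem~\ref{thr:vbig1}, the scales $\tilde\ell_\Lambda$, $\ell'_\Lambda$ in~\eqref{eq:24} are \emph{not} obtained from the balance~\eqref{condeq} with $K\sim K'$; instead, they are chosen so that both the Minami contribution (bad boxes with two eigenvalues) and the Wegner contribution (corridor boxes with an eigenvalue) to the count of missed eigenvalues are each polylogarithmic in $|\Lambda|$.

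First, I would check the applicability of Theorem~\ref{thmWM}: the hypothesis~\eqref{eq:9}, combined with the lower bound $\log^\alpha|\Lambda|\leq N(I_\Lambda)|\Lambda|$, forces $|I_\Lambda|\gtrsim\log^{-1/\rho}|\Lambda|$, so that~\eqref{eq:24} yields $\tilde\ell_\Lambda\asymp\log^{1/(d\rho')}|\Lambda|$ and $\ell'_\Lambda\asymp\log^{1/(d\rho'')}|\Lambda|$; the strict ordering $\rho<\rho'<\rho''<1/d$ then guarantees $\ell'_\Lambda<\tilde\ell_\Lambda$, $\ell'_\Lambda>\log|\Lambda|$, and $N(I_\Lambda)\geq e^{-L^\xi/C}$ for some $\xi\in(d\rho,1)$. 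Second, I would decompose $\Lambda_L$ exactly as in the proof of Theorem~\ref{thr:vbig1}, apply the Minami bound~\eqref{M} on each $\Lambda_{\ell_\Lambda}(\gamma_j)$ and the Wegner bound~\eqref{W} on the corridor boxes $\Lambda_{\ell'_\Lambda}(\gamma'_j)$ to control $\#\mathcal{S}_{\ell_\Lambda,L}$ and $\#\mathcal{S}'_{\ell_\Lambda,L}$, as in~\eqref{probaK}--\eqref{probaK'}, with thresholds $K\asymp N(I_\Lambda)|\Lambda|\log^{1/\rho'-1/\rho}|\Lambda|$ and $K'\asymp N(I_\Lambda)|\Lambda|\log^{1/(d\rho'')-1/(d\rho')}|\Lambda|$; the sum $K+K'$ is exactly the stated bound on missed eigenvalues. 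Third, I would use the high-order Minami~\eqref{HOM} with a polylogarithmic cutoff $r$ to exclude, as in~\eqref{estr2}--\eqref{estr3}, boxes carrying atypically many eigenvalues in $I_\Lambda$. Finally, I would intersect the resulting event with the SULE event of probability at least $1-e^{-\ell_\Lambda^\nu}$ from Theorem~\ref{thmFVE}(11) to locate each $\Lambda$-eigenvector inside a good host box, and apply a standard Combes--Thomas resolvent comparison between $H_\omega(\Lambda)$ and $H_\omega(\Lambda_{\ell_\Lambda}(\gamma_j))$ to obtain~\eqref{error} with exponential error $e^{-\ell'_\Lambda}$.

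The main obstacle is calibrating all exponents so the aggregate failure probability is at most $e^{-\log^{\alpha-2}|\Lambda|}$: the Chernoff-type bounds $2^{-K}$ and $2^{-K'}$ produce exponents of order $\log^{\alpha+1/\rho'-1/\rho}|\Lambda|$ and $\log^{\alpha+1/(d\rho'')-1/(d\rho')}|\Lambda|$ respectively, while the SULE/SUDEC probability gives an exponent of order $\log^{\nu/(d\rho')}|\Lambda|$; imposing that each of these dominates $\log^{\alpha-2}|\Lambda|$ is precisely what forces $\alpha\geq\alpha_0$ to be sufficiently large. The strict ordering $\rho<\rho'<\rho''<1/d$ provides the slack needed both for this probability matching and for the comparison of eigenvalues at the corridor scale $\ell'_\Lambda$; verifying the consistency of these constraints with~\eqref{eq:9} is the crux of the argument, and the guide to the otherwise mysterious exponents appearing in~\eqref{eq:24}.
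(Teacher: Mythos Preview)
Your outline captures the geometry of the decomposition and the choice of scales correctly (modulo a sign slip: \eqref{eq:9} gives $|I_\Lambda|\lesssim\log^{-1/\rho}|\Lambda|$, not $\gtrsim$; this is what makes $\ell_\Lambda\gtrsim\log^{1/(d\rho')}|\Lambda|$), and you correctly identify $K$ and $K'$ as matching the stated bound on missed eigenvalues. The gap is in how you pass from the bad-box counts $\#\mathcal S_{\ell,L}\le K$ and $\#\mathcal S'_{\ell,L}\le K'$ to the count of missed \emph{eigenvalues}. Your step~3, ``use high-order Minami with a polylogarithmic cutoff $r$ as in~\eqref{estr2}--\eqref{estr3}'', only ensures that each bad box carries at most $r$ eigenvalues, so the missed-eigenvalue count you actually obtain is $(K+K')\cdot r$, not $K+K'$. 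To make the probability in~\eqref{estr2} beat $e^{-\log^{\alpha-2}|\Lambda|}$ you need $r\gtrsim\log^{\alpha-2}|\Lambda|/\log\log|\Lambda|$, and then $(K+K')r\asymp\log^{2\alpha-2+1/\rho'-1/\rho}|\Lambda|$, which is \emph{not} $o(N(I_\Lambda)|\Lambda|)$ once $\alpha$ exceeds $2+1/\rho-1/\rho'$. Since the theorem must hold for arbitrarily large $\alpha$, this route does not close.

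The paper explicitly flags this: ``the computations done in~\eqref{estr2} and~\eqref{estr3} give a too gross estimate''. The fix is to bypass the lossy product ``(number of bad boxes) $\times$ (max eigenvalues per box)'' and instead bound directly the \emph{total} number of localization centers sitting in boxes with two or more centers. Concretely, after partitioning the boxes into $2^d$ independent subfamilies, one estimates the probability $\P_r$ that this total exceeds $2^{d+1}r$ by summing over all ways of distributing $\ge 2r$ centers among $n\le r$ boxes, using~\eqref{HOM} in each box. A short computation gives $\P_r\lesssim (K^\eta|I_\Lambda|\ell^d)^r+(CK/(\eta r))^r$; choosing $r\asymp K$ and $\eta$ small (so that $\eta\alpha'<1/\rho-1/\rho'$) yields $\P_r\le e^{-K/C}$ with the missed count bounded by $CK$ rather than $Kr$. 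An analogous argument handles the corridor boxes. This combinatorial refinement is the missing ingredient in your proposal.
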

\noindent In Theorem~\ref{thr:vbig0}, our choice of parameters is made
so as to allow as small as possible density of states (see the second
condition in~\eqref{eq:9}). The price to pay for this is that the
width of the interval $I_\Lambda$ may not be optimal in all regimes
(compare the first condition in~\eqref{eq:9} with the width of the
intervals treated in~\cite{Ge-Kl:10} when $N(I_\Lambda)\gtrsim
|I_\Lambda|^{1+\rho}$). It is nevertheless essentially optimal when
$N(I_\Lambda)\asymp e^{-|I_\Lambda|^{-\rho}}$.\vskip.1cm\noindent
We also note that, in the proof of Theorem~\ref{thr:vbig0}, the choice
of $\ell=\ell_\Lambda$ guarantees that $|I_\Lambda|\ell^d\ll1$
(see~\eqref{eq:12}); thus, Lemma~\ref{lemasympt} gives a precise
description of:
\begin{itemize}
\item the probability distribution of the $\gamma$'s for which
  $H_\omega(\Lambda_\ell(\gamma))$ has exactly one eigenvalue in
  $I_\Lambda$,
\item the distribution of this eigenvalue when this is the case
\end{itemize}
\begin{proof}[Proof of Theorem~\ref{thr:vbig0}]
  We follow the proof of Theorem~\ref{thr:vbig1}. First, note that, as
  in the proof of Theorem~\ref{thr:vsmall1}, assumption~\eqref{eq:9}
  implies that
  \begin{equation}
    \label{eq:12}
    |I_\Lambda|\lesssim\log^{-1/\rho}|\Lambda|,\quad
    \log^{1/(d\rho')}|\Lambda|\lesssim \ell_\Lambda,\quad
    \log^{1/(d\rho'')}|\Lambda|\lesssim \ell'_\Lambda 
  \end{equation}
  With our choice of $\ell_\Lambda$ and $\ell'_\Lambda$, one estimates
  $K$ defined in~\eqref{defK} and $K'$ defined in~\eqref{defK'} by
  \begin{equation}
    \label{eq:10}
    \begin{aligned}
      \log^{\alpha+1/\rho'-1/\rho}|\Lambda|&\lesssim K\lesssim
      \log^{\alpha'+1/\rho'-1/\rho}|\Lambda| \\\quad\text{and}\quad
      \log^{\alpha+1/(d\rho')-1/(d\rho)}|\Lambda| &\lesssim K'\lesssim
      \log^{\alpha'+1/(d\rho')-1/(d\rho)}|\Lambda| .
    \end{aligned}
  \end{equation}
  In the present case, the computations done in~\eqref{estr2}
  and~\eqref{estr3} give a too gross estimate of the number of
  eigenvalues, or, equivalently, of the number of localization
  centers, on which one cannot get a precise control if one wants to
  keep a good probability estimate; this comes from the fact that the
  quantities $N(I_\Lambda)|\Lambda||I_\Lambda|$, $\ell$, $\ell'$ may
  be powers of $\log|\Lambda|$. We need to study more carefully the
  number of eigenvalues missed by the description constructed in the
  proof of Theorem~\ref{thr:vbig1}. Therefore, we
  follow the ideas used in the proof of~\cite[Theorem 4.1]{Kl:11b}. \\
  Let $\Gamma_{\ell,L}$ be the set of cubes
  $\{\Lambda_\ell(\gamma_j);\ j\}$ of the decomposition introduced in
  the proof of Theorem~\ref{thr:vbig1}. We partition $\Gamma_{\ell,L}$
  into $2^d$ sets such that, any two cubes $\Lambda_\ell(\gamma)$ and
  $\Lambda_\ell(\gamma')$ in each set, the Hamiltonians
  $H_\omega(\Lambda_\ell(\gamma))$ and
  $H_\omega(\Lambda_\ell(\gamma'))$ are independent. Let these sets be
  $(\Gamma_{\ell,L,j})_{1\leq j\leq 2^d}$ and their cardinality be
  $\tilde N_j:=\#\Gamma_{\ell,L,j}$. One has $\tilde N_j\asymp
  |\Lambda|\ell^{-d}$.\\
  For $\Lambda_{\ell}(\gamma_k)\in \Gamma_{\ell,L,j}$, set
  $X_{j,k}=\tr\car_{I_\Lambda}(H_\omega(\Lambda_\ell(\gamma_k)))$.
  These variables are i.i.d. and their common distribution is
  described by~\ref{thmWM}.\\
  We now want to estimate the maximal number of localization centers
  of $H_\omega(\Lambda)$ contained in boxes of
  $(\Gamma_{\ell,L})_{1\leq j\leq 2^d}$ that each contain at least two
  centers (these are the boxes of $\mathcal{S}_{\ell,L}$ in the
  notations of the proof of Theorem~\ref{thr:vbig1}). We want to show
  that this number is, with a probability close to $1$, bounded by $C
  K$ where $K$ is defined in~\eqref{defK} and $C>0$ is a constant to
  be chosen. \\
  Let $\P_r$ be the probability to have $2^{d+1}r$ localization
  centers of $H_\omega(\Lambda)$ in cubes containing at least two
  centers. We compute
  \begin{equation*}
    \begin{split}
      \P_r&\leq \sum_{j=1}^{2^d} \sum_{n_j=1}^{r} \binom{\tilde
        N_j}{n_j}\P\left(\min_{1\leq k\leq n_j}X_{j,k}\geq2\text{
          and }\sum_{k=1}^{n_j}X_{j,k}\geq 2r\right)\\
      &\leq \sum_{j=1}^{2^d}\sum_{n_j=1}^{r} \binom{\tilde N_j}{n_j}
      \sum_{\substack{l_1+\cdots+l_{n_j}\geq 2r\\ \forall 1\leq k\leq
          n_j,\ l_k\geq2}}\prod_{k=1}^{n_j}\P\left(
        X_{j,k}=l_k\right)\\
      &\leq \sum_{j=1}^{2^d}\sum_{n_j=1}^{r} \binom{\tilde N_j}{n_j}
      \sum_{\substack{l_1+\cdots+l_{n_j}\geq 2(r-n_j)\\ \forall 1\leq
          k\leq n_j,\ l_k\geq0}}\prod_{k=1}^{n_j}\P\left(
        X_{j,k}=l_k+2\right).
    \end{split}
  \end{equation*}
  Then, using~\eqref{HOM} and the independence of the local
  Hamiltonians associated to boxes in $\Gamma_{\ell,L,j}$, we obtain
  \begin{equation*}
    \begin{split}
      \P_r&\lesssim \sum_{j=1}^{2^d}\sum_{n_j=1}^{r} \binom{\tilde
        N_j}{n_j} \sum_{\substack{l_1+\cdots+l_{n_j}\geq 2(r-n)\\
          \forall 1\leq k\leq n_j,\ l_k\geq0}} \frac{(C
        N(I_\Lambda)\ell^d)^{n_j} (|I_\Lambda|\ell^d)^{2r-n_j}}
      {(l_1+1)!\cdots (l_{n_j}+1)!}  \\&\lesssim
      \sum_{j=1}^{2^d}\sum_{n_j=1}^{r} \frac1{n_j!}  (\tilde N_j
      N(I_\Lambda)\ell^d)^{n_j} (|I_\Lambda|\ell^d)^{2r-n_j}
      \\&\lesssim \sum_{n=1}^{r} \frac1{n!}
      (|\Lambda|N(I_\Lambda))^{n} (|I_\Lambda|\ell^d)^{2r-n}
      \\&\lesssim \left(K^\eta|I_\Lambda|\ell^d\right)^{r}+
      \left(\frac{CK}{\eta r}\right)^{r}.
    \end{split}
  \end{equation*}
  In the last estimate, we have cut the previous sum into two parts,
  the first when $n$ run from $0$ to $\eta r$ and the second from
  $\eta r$ to $r$. We now choose $\eta>0$ so that
  $\eta\alpha'<1/\rho-1/\rho'$ (for $\alpha'$ in~\eqref{eq:9}, see
  also~\eqref{eq:10}).
  Recall that $|I_\Lambda|\ell^d$ is small. Thus, setting $r=
  \eta^{-1}CK$ for some large $C>0$ and using~\eqref{eq:10}, we obtain
  \begin{equation}
    \label{eq:14}
    \P_r\leq e^{-K/C} \leq e^{-\log^{\alpha-1}|\Lambda|/C}.
  \end{equation}  
  Now, as in the proof of Theorem~\ref{thr:vbig1}, cover $
  \Lambda\setminus \cup_j \Lambda_\ell(\gamma_j)$ with a partition of
  boxes $\Lambda_{\ell'}(\gamma'_j)$ (see the proof of~\cite[Theorem
  1.16]{Ge-Kl:10} for more details). In the same way as above, one
  estimates the maximal number of centers of localization contained in
  the union of the boxes $\Lambda_{\ell'}(\gamma_j)\subset
  \Lambda$. Let $\P'_r$ be the probability that this number exceeds
  $r$. Then, for $r=CK'$ (for some constant $C>0$) where $K'$ is
  defined in~\eqref{defK'}, as above, we prove
  \begin{equation}
    \label{eq:15}
    \P'_r \leq e^{-K'/C} \leq e^{-\log^{\alpha-1}|\Lambda|/C}.
  \end{equation}  
  Summing~\eqref{eq:14} and~\eqref{eq:15}, taking into
  account~\eqref{eq:10} and~\eqref{eq:12}, we obtain that, for
  $\alpha$ sufficiently large and properly chosen
  $(\rho',\rho'')\in(d\rho,1)^2$, with probability at least
  $1-e^{-\log^2|\Lambda|}$, there are at most
  \begin{equation*}
    C(K+K')\lesssim
    N(I_\Lambda)|\Lambda|\,\left[\log^{1/\rho'-1/\rho}|\Lambda|
      +\log^{1/(d\rho'')-1/(d\rho')}|\Lambda|\right].
  \end{equation*}
  eigenvalues that are not accounted for by the description given in
  Theorem~\ref{thr:vbig0}. This completes the proof of
  Theorem~\ref{thr:vbig0}.
\end{proof}

\section{Applications to eigenvalue statistics: proofs}
\label{sec:applic}

\subsection{The proof of Theorem~\ref{ThLD}}
\label{sec:proof-theorem1}
First note that under the assumptions of Theorem~\ref{ThLD}, the
assumptions of Theorem~\ref{thr:vbig1} are fulfilled. We will use this
decomposition.\\
Let $X=X(\Lambda_\ell,I_\Lambda)$ the Bernoulli random variable equal
to 1 if $H_{\omega,\Lambda_\ell}$ has an eigenvalue in $I_\Lambda$ and
zero otherwise. Recall that the distribution of this random variable
is described by~\eqref{eq:51} in Lemma~\ref{lemasympt}.\vskip.1cm
\noindent To prove Theorem~\ref{ThLD}, we consider the collection of
Bernoulli random variables $X_j:=X(\Lambda_\ell(\gamma_j),I_\Lambda)$,
$j=1,\cdots, \tilde{N}$ defined by
\begin{itemize}
\item $X(\Lambda_\ell(\gamma),I_\Lambda)$ is defined in
  section~\ref{sec:distr-local-eigenv}
\item the boxes $(\Lambda_\ell(\gamma_j))_{1\leq j\leq \tilde N}$ are
  given by Theorem~\ref{thr:vbig1} and $\tilde
  N\asymp|\Lambda|/\ell^d$.
\end{itemize}
Thus, the random variables $(X_j)_j$ are i.i.d.\\
We have
\begin{equation}
  \label{decomp}
  \left| \tr \car_{I_\Lambda}(H_\omega(\Lambda))- N(I_\Lambda)|\Lambda|
  \right| \le \left| \tr \car_{I_\Lambda}(H_\omega(\Lambda)) -
    \sum_{j=1}^{\tilde{N}} X_j \right| + \left|
    \sum_{j=1}^{\tilde{N}} X_j - N(I_\Lambda)|\Lambda|
  \right| 
\end{equation}
By Theorem~\ref{thr:vbig1}, with a probability $\ge 1 -\exp\left(-c
  (N(I_\Lambda) |\Lambda|)^{\delta_{\kappa,\nu}} \right)$ we have
\begin{equation}
  \label{boundtr}
  \left| \tr \car_{I_\Lambda}(H_\omega(\Lambda)) - \sum_{j=1}^{\tilde{N}} X_j
  \right| \lesssim (N(I_\Lambda) |\Lambda| )^{\gamma_{\kappa,\nu}}.
\end{equation}
Next, the large deviation principle for i.i.d. $(0,1)$-Bernoulli
variables with expectation $p$ gives, for $\delta\in]\frac12, 1[$,
yields (see e.g.~\cite{MR98m:60001}),
\begin{align}
  \label{LDX}
  \P\left(\left|\sum_{j=1}^{\tilde{N}} X_j - p\tilde{N}\right| \ge
    (p\tilde{N})^ \delta \right) \le C \exp\left(-c_p (p\tilde{N})^{2
      \delta-1}\right),
\end{align}
where the constant $c_p$ is uniformly bounded as $p\downarrow 0$. We
apply the latter with $p=\P(X=1)$. On the account of
Lemma~\ref{lemasympt},~\eqref{condeq} and~\eqref{KK'bound}, we have
\begin{equation}
  \label{pN}
  |p\tilde{N}-N(I_\Lambda) |\Lambda| | \lesssim
  (N(I_\Lambda)|\Lambda|)^{1-\nu}.
\end{equation}
Combining~\eqref{decomp},~\eqref{boundtr},~\eqref{LDX} and~\eqref{pN}
we obtain, for any $\delta'\in]0,1[$,
\begin{multline}
  \P\left\{ \left| \tr \car_{I_\Lambda}(H_\omega(\Lambda))-
      N(I_\Lambda)|\Lambda| \right| \le
    (N(I_\Lambda)|\Lambda|)^{\gamma_{\kappa,\nu}} +
    (N(I_\Lambda)|\Lambda|)^{1-\nu} + (p\tilde{N})^{\frac12 + \frac12
      \delta'}\right\} \\ \ge 1- \exp\left(- (N(I_\Lambda) |\Lambda|
    )^{\delta_{\kappa,\nu}} \right) - C \exp\left(-c_p (p\tilde{N})^{
      \delta'}\right).
\end{multline}
Since $\gamma_{\kappa,\nu}>1-\nu$, and choosing
$\delta'=\delta_{\kappa,\nu}$, we get
\begin{multline}
  \P\left\{ \left| \tr \car_{I_\Lambda}(H_\omega(\Lambda))-
      N(I_\Lambda)|\Lambda| \right| \le
    3(N(I_\Lambda)|\Lambda|)^{\max(\gamma_{\kappa,\nu},\frac12 +
      \frac12 \delta_{\kappa,\nu})}  \right\} \\
  \ge 1- 2 \exp\left(- (N(I_\Lambda) |\Lambda| )^{\delta_{\kappa,\nu}}
  \right).
\end{multline}
Taking $\kappa$ sufficiently small, we get $\delta_{\kappa,\nu}=
\kappa\nu$. The result follows with $\eps=\kappa\nu$.\vskip.2cm
\noindent We turn to the proof of point~(2) which follows from the
previous analysis and the central limit theorem for Bernoulli random
variables, provided $\nu>\frac 12$.\\
This completes the proof of Theorem~\ref{ThLD}.
\subsection{The proofs of Theorems~\ref{thr:3},~\ref{thr:2}
  and~\ref{thr:1}}
\label{sec:proofs-theorems}
As already stated above, one can follow verbatim the proofs of the
corresponding results in~\cite{Ge-Kl:10,Kl:10a}. Let us just make a
few comments on those proofs.
\subsubsection{The proof of Theorem~\ref{thr:3}}
\label{sec:proof-theorem-1}
The corresponding result is~\cite[Theorem
1.2]{Ge-Kl:10}. In~\cite{Ge-Kl:10}, we also have the stronger Theorems
1.3 and 1.6 that also have their analogues in the present
setting. Comparing~\eqref{eq:60} with the
condition~\cite[(1.12)]{Ge-Kl:10}, we see that we have now only a
local condition at $E_0$ only. This gain is obtained thanks to
Lemma~\ref{lemasympt} and Theorem~\ref{thr:vsmall1}.
\subsubsection{The proofs of Theorem~\ref{thr:2} and~\ref{thr:1}}
\label{sec:proof-theorem-2}
The result corresponding to Theorem~\ref{thr:2} is~\cite[Theorem
1.4]{Kl:10a}. Theorem~\ref{thr:1} is a then a consequence of
Theorem~\ref{thr:2}, see e.g.~\cite{Mi:11,Kl:10a}. Under uniform
assuptions of the type $N(J)e^{|J|^{-\rho}}\geq1$ for some
$\rho\in(0,1)$ and for all $J\subset E_0+I_\Lambda$, one may as well
follow the method developed in~\cite{Ge-Kl:10}.
\\
As we shall see, to prove Theorem~\ref{thr:2} in the present case is
easier than in \cite[Theorem1.4]{Kl:10a}. The basic idea of the proof
of asymptotic ergodicity in~\cite[Theorem 1.4]{Kl:10a} is, for a given
interval $E_0+I_\Lambda$, to split it into smaller intervals such that, on
most of these intervals, the assumptions of a reduction of the same
type as Theorem~\ref{thr:vbig0} is valid plus a remaining set of
energies that only contains a negligible fraction of the eigenvalues
in $E_0+I_\Lambda$. Then, one proves asymptotic ergodicity for each of the
small intervals. Therefore, one needs to use an analogue of
Lemma~\ref{lemasympt} to control the eigenvalues. This imposes further
restrictions on how one has to choose the small intervals. In the
present case, thanks to the improvement obtained in
Lemma~\ref{lemasympt} over its analogues in~\cite[Lemma 2.2]{Ge-Kl:10}
and~\cite[Lemma 2.2]{Kl:10a}, the way to split the interval
$E_0+I_\Lambda$ will be much simpler (as a comparison of what follows with
the discussions following~\cite[Theorem 2.1 and Lemma 2.2]{Kl:10a} and
in~\cite[Section 3.2.1]{Kl:10a} will immediately show).\\
We will not give a complete proof of Theorem~\ref{thr:2} only
indicate the changes to be made in the proof of~\cite[Theorem
1.4]{Kl:10a}.\\
Pick $\alpha_0<\alpha<\alpha''<\alpha'$ (where $\alpha_0$ is given by
Theorem~\ref{thr:vbig0}). Pick $\mu>0$. Now, partition $I_\Lambda$
into intervals $(I_j)_{j\in J}$ such that
$N(I_j)\asymp|\Lambda|^{-1}\log^{\alpha''}|\Lambda|$. Define
\begin{equation}
  \label{eq:18}  B=\{j\in J;\ N(I_j)\leq|I_j|^{\mu}\}.
\end{equation}
Then, one clearly has
\begin{equation*}
  |I_\Lambda|\geq\sum_{j\in B}|I_j|\geq\sum_{j\in
    B}N(I_j)^{1/\mu}\asymp\# B |\Lambda|^{-1/\mu}\log^{\alpha''/\mu}|\Lambda|
\end{equation*}
thus, $\#B\lesssim |\Lambda|^{1/\mu}$ and
\begin{equation*}
  N\left(\bigcup_{j\in B}I_j\right)\lesssim |\Lambda|^{(1-\mu)/\mu}
  \log^{\alpha''}|\Lambda|.
\end{equation*}
The number of eigenvalues expected in $E_0+I_\Lambda$ is of order
$N(E_0+I_\Lambda)|\Lambda|$, thus, by~\eqref{eq:63}, larger than
$|\Lambda|^{1-\delta}$. Pick $\nu>0$. By the enhanced Wegner's
estimate~\eqref{W} and Markov's inequality, we know that, with
probability at least $1-|\Lambda|^{-\nu}$, the number of eigenvalues in
$\bigcup_{j\in B}I_j$ is bounded by
$|\Lambda|^{\nu+1/\mu}\log^{\alpha''}|\Lambda|$. We now pick $\nu$ 
and $\mu^{-1}$ small so that $\nu+1/\mu< 1-\delta$.\\
For $j\not\in B$, one has $N(I_j)\geq|I_j|^{\mu}$ and, thus, one can
then apply Theorem~\ref{thr:vbig0} to $I_j$ for $j\not\in B$. So, in
each $I_j$, we control $N(I_j)|\Lambda|(1+o(1))$ eigenvalues (the error is uniform in $j$ by
Theorem~\ref{thr:vbig0}); thus, the total number of eigenvalues we
control exceeds $\sum_{j\not\in B}N(I_j)|\Lambda|(1+o(1))$ that is,
exceeds $N(E_0+I_\Lambda)|\Lambda|(1+o(1))\geq|\Lambda|^{1-\delta}$
as announced above.\\
For $j\not\in B$, we moreover want to be able to apply
Lemma~\ref{lemasympt} to control the eigenvalues ``in'' the cubes
$\Lambda_{\ell_\Lambda}(\gamma)$ constructed in
Theorem~\ref{thr:vbig0} and the interval $I_j$. Therefore, we need to
check that $|I_j||\Lambda_{\ell_\Lambda}(\gamma)|\ll 1$
(see~\eqref{eq:51} and~\eqref{eq:52}). This is guaranteed
by~\eqref{eq:24} in
Theorem~\ref{thr:vbig0}.\\
Now thanks to Theorem~\ref{thr:vbig0} and Lemma~\ref{lemasympt}, in
each $I_j$ for $j\not\in B$, we reason as in the proof
of~\cite[Theorem 1.4]{Kl:10a}, or more precisely, as in the proof
of~\cite[Lemma 3.2]{Kl:10a} to obtain the asymptotic ergodicity.
\begin{Rem}
  \label{rem:3}
  One can actually prove Theorem~\ref{thr:2} on intervals to which the
  IDS gives a smaller weight, that is, relax assumption~\eqref{eq:63}
  into
  $|\Lambda|\cdot\log^{-\beta}|\Lambda|\cdot N(E_0+I_\Lambda)\to+\infty$
  for not too small $\beta$ (e.g. for not too negative $\beta$). Then,
  the condition $N(I_j)\leq|I_j|^\mu$ defining $B$ will have to be
  replaced with conditions of the type $N(I_j)\leq e^{-|I_j|^{-\rho}}$
  ($\rho$ will now be in $(0,1)$).\\
  Moreover, if $\beta$ is not sufficiently large, a restriction
  analogous to~\cite[(1.10) in Theorem 1.4]{Kl:10a} will come up
  again.
\end{Rem}

\def\cprime{$'$} \def\cydot{\leavevmode\raise.4ex\hbox{.}}
\def\cprime{$'$}


\begin{thebibliography}{ASFH01}

\bibitem[ASFH01]{MR2002h:82051} Michael Aizenman, Jeffrey~H. Schenker,
  Roland~M. Friedrich, and Dirk Hundertmark.  \newblock Finite-volume
  fractional-moment criteria for {A}nderson localization.  \newblock
  {\em Comm. Math. Phys.}, 224(1):219--253, 2001.  \newblock Dedicated
  to Joel L. Lebowitz.

\bibitem[BHS07]{MR2360226}
Jean~V. Bellissard, Peter~D. Hislop, and G{\"u}nter Stolz.
\newblock Correlation estimates in the {A}nderson model.
\newblock {\em J. Stat. Phys.}, 129(4):649--662, 2007.

\bibitem[CGK09]{MR2505733} Jean-Michel Combes, Fran{{\c c}}ois Germinet,
  and Abel Klein.  \newblock Generalized eigenvalue-counting estimates
  for the {A}nderson model.  \newblock {\em J. Stat. Phys.},
  135(2):201--216, 2009.

\bibitem[CGK10]{MR2663411} Jean-Michel Combes, Fran{{\c c}}ois Germinet,
  and Abel Klein.  \newblock Poisson statistics for eigenvalues of
  continuum random {S}chr{\"o}dinger operators.  \newblock {\em
    Anal. PDE}, 3(1):49--80, 2010.
    
\bibitem[CGK11]{CGK11} Jean-Michel Combes, Fran{{\c c}}ois Germinet, and
  Abel Klein.  \newblock Local Wegner estimates for random Schr{\"o}dinger
  operators, Minami estimate and Lifshitz behavior of the density of
  states \newblock preprint

\bibitem[CL95]{PhysRevLett.75.69} Ovidiu Costin and Joel~L. Lebowitz.
  \newblock Gaussian fluctuation in random matrices.  \newblock {\em
    Phys. Rev. Lett.}, 75(1):69--72, Jul 1995.

\bibitem[Dur96]{MR98m:60001} Richard Durrett.  \newblock {\em
    Probability: theory and examples}.  \newblock Duxbury Press,
  Belmont, CA, second edition, 1996.

\bibitem[GK01]{MR1859605}
Fran{{\c c}}ois Germinet and Abel Klein.
\newblock Bootstrap multiscale analysis and localization in random media.
\newblock {\em Comm. Math. Phys.}, 222(2):415--448, 2001.

\bibitem[GK04]{MR2078370} Fran{{\c c}}ois Germinet and Abel Klein.
  \newblock A characterization of the {A}nderson metal-insulator
  transport transition.  \newblock {\em Duke Math. J.},
  124(2):309--350, 2004.

\bibitem[GK06]{MR2203782} Francois Germinet and Abel Klein.  \newblock
  New characterizations of the region of complete localization for
  random {S}chr{\"o}dinger operators.  \newblock {\em J. Stat. Phys.},
  122(1):73--94, 2006.

\bibitem[GKl10]{Ge-Kl:10} Fran{{\c c}}ois Germinet and Fr{\' e}d{\'e}ric Klopp.
  \newblock Spectral statistics for random {Schr{\"o}dinger} operators in
  the localized regime.  \newblock ArXiv
  \url{http://arxiv.org/abs/1011.1832}, 2010.

\bibitem[GV07]{MR2290333}
Gian~Michele Graf and Alessio Vaghi.
\newblock A remark on the estimate of a determinant by {M}inami.
\newblock {\em Lett. Math. Phys.}, 79(1):17--22, 2007.

\bibitem[GrMS90]{MR1083950} L.~N. Grenkova, S.~A. Molchanov, and
  Yu.~N. Sudarev.  \newblock The structure of the edge of the
  multidimensional {A}nderson model spectrum.  \newblock {\em
    Teoret. Mat. Fiz.}, 85(1):32--40, 1990.

\bibitem[Gu05]{MR2124079} Jonas Gustavsson.  \newblock Gaussian
  fluctuations of eigenvalues in the {GUE}.  \newblock {\em
    Ann. Inst. H. Poincar{\'e} Probab. Statist.}, 41(2):151--178, 2005.

\bibitem[Kir08]{MR2509110} Werner Kirsch.  \newblock An invitation to
  random {S}chr{\"o}dinger operators.  \newblock In {\em Random
    {S}chr{\"o}dinger operators}, volume~25 of {\em Panor.  Synth{\`e}ses},
  pages 1--119. Soc. Math. France, Paris, 2008.  \newblock With an
  appendix by Fr{\'e}d{\'e}ric Klopp.

\bibitem[KM06]{KM06} Abel Klein, Stanislas Molchanov. \newblock
  Simplicity of eigenvalues in the Anderson model \newblock {\em
    J. Statist. Phys.}, 122(1):95--99, 2006.

\bibitem[KirM07]{MR2307751} Werner Kirsch and Bernd Metzger.
  \newblock The integrated density of states for random {S}chr{\"o}dinger
  operators.  \newblock In {\em Spectral theory and mathematical
    physics: a {F}estschrift in honor of {B}arry {S}imon's 60th
    birthday}, volume~76 of {\em Proc. Sympos.  Pure Math.}, pages
  649--696. Amer. Math. Soc., Providence, RI, 2007.

\bibitem[Kl98]{MR1616938} Fr{\'e}d{\'e}ric Klopp.  \newblock Band edge
  behavior of the integrated density of states of random {J}acobi
  matrices in dimension {$1$}.  \newblock {\em J. Statist. Phys.},
  90(3-4):927--947, 1998.

\bibitem[Kl99]{MR1695202} Fr{\'e}d{\'e}ric Klopp.  \newblock Internal
  {L}ifshits tails for random perturbations of periodic {S}chr{\"o}dinger
  operators.  \newblock {\em Duke Math. J.}, 98(2):335--396, 1999.

\bibitem[Kl01]{MR1845184} Fr{\'e}d{\'e}ric Klopp.  \newblock Correction to:
  ``{I}nternal {L}ifshits tails for random perturbations of periodic
  {S}chr{\"o}dinger operators'' [{D}uke {M}ath {J}. {\bf 98} (1999), no.\
  2, 335--396; {MR}1695202 (2000m:82029)].  \newblock {\em Duke
    Math. J.}, 109(2):411--412, 2001.

\bibitem[Kl02a]{MR1902460} Fr{\'e}d{\'e}ric Klopp.  \newblock Internal
  {L}ifshitz tails for {S}chr{\"o}dinger operators with random potentials.
  \newblock {\em J. Math. Phys.}, 43(6):2948--2958, 2002.

\bibitem[Kl02b]{MR1942859} Fr{\'e}d{\'e}ric Klopp.  \newblock Weak disorder
  localization and {L}ifshitz tails.  \newblock {\em
    Comm. Math. Phys.}, 232(1):125--155, 2002.

\bibitem[Kl09]{MR2655149} Fr{\'e}d{\'e}ric Klopp.  \newblock Lifshitz tails
  for alloy type random models in constant magnetic fields: a short
  review.  \newblock In {\em Spectral and scattering theory for
    quantum magnetic systems}, volume 500 of {\em Contemp. Math.},
  pages 153--157. Amer. Math. Soc., Providence, RI, 2009.

\bibitem[Kl10]{Kl:10a} Fr{\'e}d{\'e}ric Klopp.  \newblock Asymptotic
  ergodicity of the eigenvalues of random operators in the localized
  phase.  \newblock ArXiv: \url{http://fr.arxiv.org/abs/1012.0831},
  2010.

\bibitem[Kl11]{Kl:11b} Fr{\'e}d{\'e}ric Klopp.  \newblock Universal joint
  asymptotic ergodicity of the eigenvalues and localization centers of
  random operators in the localized phase.  \newblock In preparation,
  2011.

\bibitem[KlR06]{MR2249786} Fr{\'e}d{\'e}ric Klopp and Georgi Raikov.
  \newblock Lifshitz tails in constant magnetic fields.  \newblock
  {\em Comm. Math. Phys.}, 267(3):669--701, 2006.

\bibitem[KlW02]{MR1979772} Fr{\'e}d{\'e}ric Klopp and Thomas Wolff.  \newblock
  Lifshitz tails for 2-dimensional random {S}chr{\"o}dinger operators.
  \newblock {\em J. Anal. Math.}, 88:63--147, 2002.  \newblock
  Dedicated to the memory of Tom Wolff.

\bibitem[Mez93]{MR1249597} G.~A. Mezincescu.  \newblock Internal
  {L}ifschitz singularities for one-dimensional {S}chr{\"o}dinger
  operators.  \newblock {\em Comm. Math. Phys.}, 158(2):315--325,
  1993.

\bibitem[Min96]{MR97d:82046}
Nariyuki Minami.
\newblock Local fluctuation of the spectrum of a multidimensional {A}nderson
  tight binding model.
\newblock {\em Comm. Math. Phys.}, 177(3):709--725, 1996.

\bibitem[Min07]{MR2352280} Nariyuki Minami.  \newblock Theory of point
  processes and some basic notions in energy level statistics.
  \newblock In {\em Probability and mathematical physics}, volume~42
  of {\em CRM Proc. Lecture Notes}, pages 353--398. Amer. Math. Soc.,
  Providence, RI, 2007.

\bibitem[Min11]{Mi:11} Nariyuki Minami.  \newblock Energy level
  statistics: a formulation and some examples.  \newblock In
  N.~Minami, editor, {\em Spectra of random operators and related
    topics}, 2011.  \newblock To appear.

\bibitem[Mol82]{MR84e:34081}
Stanislav~A. Molchanov.
\newblock The local structure of the spectrum of a random one-dimensional
  {S}chr{\"o}dinger operator.
\newblock {\em Trudy Sem. Petrovsk.}, (8):195--210, 1982.


\bibitem[PF92]{MR94h:47068} Leonid Pastur and Alexander Figotin.
  \newblock {\em Spectra of random and almost-periodic operators},
  volume 297 of {\em Grundlehren der Mathematischen Wissenschaften
    [Fundamental Principles of Mathematical Sciences]}.  \newblock
  Springer-Verlag, Berlin, 1992.

\bibitem[T00]{MR1793333} Christophe Texier.  \newblock Individual
  energy level distributions for one-dimensional diagonal and
  off-diagonal disorder.  \newblock {\em J. Phys. A},
  33(35):6095--6128, 2000.

\end{thebibliography}
\end{document}